\documentclass[10pt, twocolumn]{article}
\usepackage[letterpaper, margin=0.75in]{geometry}
\usepackage{amsmath,amssymb,amsthm}
\usepackage{mathtools}
\usepackage{bm}
\usepackage{booktabs}
\usepackage{times}
\usepackage{graphicx}
\graphicspath{{figures/}{./}}
\usepackage{tikz}
\usepackage{titlesec}
\titleformat{\section}{\normalfont\large\scshape\centering}{\Roman{section}.}{0.5em}{}
\titleformat{\subsection}{\normalfont\itshape}{\ \Alph{subsection}.}{0.5em}{}

\newtheorem{theorem}{Theorem}
\newtheorem{corollary}[theorem]{Corollary}
\newtheorem{definition}[theorem]{Definition}
\newtheorem{proposition}[theorem]{Proposition}

\newtheorem{remark}{Remark}

\DeclareMathOperator{\Tr}{Tr}

\DeclareMathOperator{\rank}{rank}

\begin{document}

\title{Quantum Algebraic Diversity:\\Single-Copy Density Matrix Estimation\\via Group-Structured Measurements}
\author{Mitchell~A.~Thornton,~\textit{Senior~Member,~IEEE}\\[4pt]
\small Richardson, TX 75080 USA}
\date{}
\maketitle

\begin{abstract}
We extend the algebraic diversity (AD) framework from classical signal processing to quantum measurement theory.  The central result, the Quantum Algebraic Diversity (QAD) Theorem, establishes that a group-structured positive operator-valued measure (POVM) applied to a single copy of a quantum state produces a group-averaged density matrix estimator that recovers the spectral structure of the true density matrix, analogous to the classical result that a group-averaged outer product recovers covariance eigenstructure from a single observation.  We establish a formal Classical-Quantum Duality Map connecting classical covariance estimation to quantum state tomography, and prove an Optimality Inheritance Theorem showing that classical group optimality transfers to quantum settings via the Born map.  SIC-POVMs are identified as algebraic diversity with the Heisenberg-Weyl group, and mutually unbiased bases (MUBs) as algebraic diversity with the Clifford group, revealing the hierarchy $\mathrm{HW}(d) \subseteq \mathcal{C}(d) \subseteq S_d$ that mirrors the classical hierarchy $\mathbb{Z}_M \subseteq G_{\min} \subseteq S_M$.  The double-commutator eigenvalue theorem provides polynomial-time adaptive POVM selection.  A worked qubit example demonstrates that the group-averaged estimator from a single computational-basis measurement, averaged over a matched $\mathbb{Z}_2$ group, recovers a full-rank approximation to a mixed qubit state, achieving fidelity 0.99 where standard single-basis tomography produces a rank-1 estimate with fidelity 0.80.  Monte Carlo simulations on qudits of dimension $d = 2$ through $d = 13$ (200 random states per dimension) confirm that the Heisenberg-Weyl QAD estimator maintains fidelity above 0.90 across all dimensions from a single measurement outcome, while standard tomography fidelity degrades as $\sim 1/d$; the growth of the ratio reflects the collapse of the single-outcome standard estimator, not a reduction in the number of copies required per parameter.  The single-copy group-averaged estimator is biased toward the symmetrized state; it reduces the number of distinct measurement settings rather than the per-parameter sampling cost.  We define the structural capacity of a density matrix $\kappa(\rho) = 1 + 1/\Tr(\rho^2)$, governed by the Renyi-2 entropy of the eigenvalue distribution, as the effective number of spectral parameters present in $\rho$, and discuss a conjectured structural analogue of Holevo's content bound: von Neumann entropy (Renyi order 1) characterizes extractable content, while $\kappa$ (Renyi order 2) characterizes the spectral structure available to a single observation.  This positions QAD as a structural complement to content-based measurement theory.
\end{abstract}

\section{Introduction}

Quantum state tomography, the reconstruction of the density matrix $\rho$ of a quantum system from measurement outcomes, is a fundamental task in quantum information science.  For a $d$-dimensional system, the density matrix has $d^2 - 1$ independent real parameters.  Standard tomography requires $O(d^2/\varepsilon^2)$ identically prepared copies for trace-distance error $\varepsilon$~\cite{haah2017sample}, a cost that scales prohibitively for multi-qubit systems.

In classical signal processing, an analogous bottleneck exists: estimating the $M \times M$ covariance matrix of a signal requires $L \geq 2M$ independent observations.  The algebraic diversity framework~\cite{thornton2026ad} reveals that this requirement is an artifact of using the trivial group $G = \{e\}$ when forming the outer product $\mathbf{x}\mathbf{x}^H$.  By selecting a richer group of order $M$, a full-rank estimate of the covariance eigenstructure can be recovered from a \emph{single} observation via a group-averaged outer product, at the cost of a structural bias.  The key mechanism is that each group element provides an algebraically distinct ``view'' of the observation, and averaging over the group orbit separates the structured (signal) component from the unstructured (noise) component.

This paper shows that the same algebraic principle extends to quantum measurement.  The measurement outcome distribution (the Born probability vector) plays the role of the classical observation, and the density matrix plays the role of the covariance matrix.  A group-structured POVM applied to a single state copy produces a group-averaged estimator that recovers the spectral structure of $\rho$, just as the classical group-averaged outer product recovers the spectral structure of $\mathbf{R}$.

Quantum algebraic diversity is not an alternative to standard tomography; it is a generalization.  To see why, consider three scenarios for estimating the density matrix of a $d$-dimensional quantum system.

In standard tomography, a single copy is measured in a fixed basis $\{|k\rangle\}$.  The trivial group $G = \{e\}$ acts on the measurement: no basis rotation is applied.  The outcome is a single projector $|k\rangle\langle k|$, which has rank one and contains no information about the off-diagonal structure of $\rho$.  To recover $\rho$, one must prepare $O(d^2)$ identical copies, measure each in a (possibly different) fixed basis, and accumulate the rank-1 outcomes.  This is the quantum analog of temporal averaging with the trivial group.

Now suppose the experimenter has access to a SIC-POVM, a symmetric informationally complete measurement with $d^2$ outcomes.  A SIC-POVM is generated by applying the $d^2$ elements of the Heisenberg-Weyl group $\mathrm{HW}(d)$ to a single fiducial vector.  The group-averaged estimator from a single measurement outcome has rank $d$ and maintains fidelity above 0.90 across all dimensions tested (Table~\ref{tab:qudit}, Figs.~\ref{fig:fidelity}--\ref{fig:spectral}).

The key observation is that the improvement arises from the same algebraic mechanism as spatial beamforming in classical signal processing.  The $d^2$ group elements of $\mathrm{HW}(d)$ provide $d^2$ algebraically distinct views of the single measurement outcome.  These views are deterministically related rotations of one outcome, not independent samples: averaging over the orbit produces a full-rank, well-conditioned estimate whose eigenbasis and eigenvalue ordering track those of $\rho$, but it is biased toward the symmetrized state and does not lower the sampling variance the way independent copies would.  What the group structure buys is informational completeness in a single measurement setting, in place of the $O(d^2)$ distinct settings that standard tomography accumulates one rank-1 outcome at a time.  Driving the estimate to a fixed error still requires repeated copies, governed by the usual sampling law.

\subsection{Contributions}

\begin{enumerate}
\item \textbf{QAD Theorem} (Theorem~\ref{thm:qad}): We prove that a group-averaged density estimator from a single measurement outcome is full-rank and well-conditioned, with an eigenbasis and eigenvalue ordering matched to $\rho$ at the cost of a spectral bias, and that it reduces the number of distinct measurement settings required for informational completeness.

\item \textbf{Classical-Quantum Duality} (Theorem~\ref{thm:duality}): We establish a formal correspondence between classical covariance estimation and quantum state estimation, showing that the Born map is the bridge between the two.

\item \textbf{Optimality Inheritance} (Theorem~\ref{thm:opt_inherit}): We prove that the optimal classical group for a covariance structure transfers to the optimal quantum group for the corresponding density matrix.

\item \textbf{SIC/MUB Identification} (Propositions~\ref{prop:sic}--\ref{prop:mub}): We identify SIC-POVMs and MUBs as instances of algebraic diversity with specific groups, revealing a structural hierarchy parallel to the classical DFT/DCT/KLT hierarchy.

\item \textbf{Worked Example} (Section~\ref{sec:example}): We demonstrate the QAD estimator on a single qubit, showing explicit fidelity improvement over standard single-basis tomography.

\item \textbf{Structural Capacity and von Neumann--Holevo Connection}: We define the structural capacity $\kappa(\rho) = 1 + 1/\Tr(\rho^2)$ of a density matrix as the effective number of spectral parameters present, and discuss a conjectured structural analogue of Holevo's content bound: von Neumann entropy ($H_1$) characterizes extractable content, structural capacity ($H_2$) characterizes available spectral structure.
\end{enumerate}

\subsection{Notation}

Throughout, $\mathcal{H} \cong \mathbb{C}^d$ denotes a $d$-dimensional Hilbert space, $\rho$ a density matrix ($\rho \geq 0$, $\Tr(\rho) = 1$), $U_g$ a unitary representation of group element $g$, $(\cdot)^\dagger$ the conjugate transpose, and $F(\rho, \sigma) = (\Tr\sqrt{\sqrt{\rho}\sigma\sqrt{\rho}})^2$ the Uhlmann fidelity.

\section{Classical Algebraic Diversity (Review)}

\begin{definition}[Classical Group-Averaged Estimator~\cite{thornton2026ad}]
Given an observation $\mathbf{x} \in \mathbb{C}^M$ and a finite group $G$ with unitary representation $\{\pi_g\}_{g \in G}$, the group-averaged estimator is
\begin{equation}\label{eq:classical_fg}
\hat{\mathbf{R}}_G = \frac{1}{|G|}\sum_{g \in G} [\pi_g\mathbf{x}][\pi_g\mathbf{x}]^*.
\end{equation}
\end{definition}

The key results from~\cite{thornton2026ad} are:

\begin{enumerate}
\item[(R1)] \textbf{Full rank:} $\rank(\hat{\mathbf{R}}_G) = M$ almost surely from a single observation when $|G| \geq M$.

\item[(R2)] \textbf{Spectral consistency:} When $[\mathbf{A}_G, \mathbf{R}] = \mathbf{0}$ (the Cayley graph adjacency matrix commutes with the population covariance), the eigenvectors of $\hat{\mathbf{R}}_G$ converge to the eigenvectors of $\mathbf{R}$.

\item[(R3)] \textbf{Optimality:} The symmetric group $S_M$ is universally optimal: its spectral decomposition yields the Karhunen--Lo\`eve transform.

\item[(R4)] \textbf{Group selection:} The double-commutator GEVP~\cite{thornton2026dc} provides polynomial-time optimal group selection via $\mathbf{M}\mathbf{c} = \lambda\mathbf{G}\mathbf{c}$, where $M_{ij} = \Tr(B_i^*[\mathbf{R},[\mathbf{R}, B_j]])$.

\item[(R5)] \textbf{General Algebraic Averaging~\cite{thornton2026gaat}:} The algebraic diversity principle extends beyond the outer product to arbitrary $G$-compatible statistics.  The number of algebraically distinct views a statistic produces on the group orbit is the \emph{effective dimension} $d_{\mathrm{eff}}(G, f) = |\{f(\pi_g(\mathbf{x})) : g \in G\}|$.  For the outer product, $d_{\mathrm{eff}} = |G|$; the density matrix, as a rank-2 tensor, admits $d_{\mathrm{eff}} = d^2$ under the Heisenberg-Weyl group.  This count governs the rank and conditioning of the single-observation estimator and the informational completeness of the measurement; it is not a multiplier on the per-sample variance, which continues to obey the sampling law.  The trivial-group case $d_{\mathrm{eff}} = 1$ recovers a single rank-1 view.
\end{enumerate}

\section{The Classical-Quantum Duality Map}

The Born rule establishes a linear map from density matrices to probability vectors.  We show that this map is the bridge between classical and quantum algebraic diversity.

\begin{definition}[Born Map]\label{def:born}
For an informationally complete POVM $\{E_m\}_{m=1}^{d^2}$ on $\mathcal{H} \cong \mathbb{C}^d$, the Born map $\mathcal{B}: \mathcal{H}_d \to \mathbb{R}^{d^2}$ sends
\begin{equation}
\rho \mapsto \mathbf{q}, \quad q_m = \Tr(\rho E_m).
\end{equation}
For an informationally complete POVM, $\mathcal{B}$ is injective: distinct density matrices produce distinct Born vectors.
\end{definition}

\begin{theorem}[Classical-Quantum Duality]\label{thm:duality}
The following correspondence holds between classical covariance estimation and quantum state estimation:
\begin{center}
\renewcommand{\arraystretch}{1.15}
\setlength{\tabcolsep}{3pt}
\small
\begin{tabular}{@{}ll@{}}
\toprule
\textbf{Classical AD} & \textbf{Quantum AD} \\
\midrule
Observation $\mathbf{x} \in \mathbb{C}^M$ & Born vector $\mathbf{q} \in \mathbb{R}^{d^2}$ \\
Covariance $\mathbf{R} = E[\mathbf{x}\mathbf{x}^*]$ & Density matrix $\rho$ \\
$L$ independent snapshots & $N$ identical state copies \\
Group action $\pi_g(\mathbf{x})$ & POVM rotation $U_g E_m U_g^\dagger$ \\
Outer product $\mathbf{x}\mathbf{x}^*$ & Projector $|\phi_m\rangle\langle\phi_m|$ \\
Group-averaged $\hat{\mathbf{R}}_G$ & Group-averaged $\hat{\rho}_G$ \\
$\delta(G,\mathbf{R}) = 0$ & $[U_g, \rho] = 0\ \forall g$ \\
\bottomrule
\end{tabular}
\end{center}
The pairing of $L$ independent snapshots with $N$ identical state copies in the table is a correspondence of roles, not of statistical content: the $|G|$ group elements are deterministically related views of a single observation, not independent samples.  Under this map, the \emph{algebraic} properties (full-rank structure, the commutativity and isotypic conditions, group optimality, and double-commutator group selection) transfer to quantum settings.  Specifically, for any property $P$ of the classical group-averaged estimator $\hat{\mathbf{R}}_G(\mathbf{x})$ that depends only on the algebraic relationship between the group representation and the matrix being estimated, the quantum group-averaged estimator $\hat{\rho}_G$ satisfies the corresponding quantum property $P'$ obtained by replacing $\mathbf{R}$ with $\rho$ and $\pi_g$ with $U_g$.
\end{theorem}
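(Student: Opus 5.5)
The plan is to make the informal "transfer principle" precise by exhibiting an explicit structure-preserving map, and then to argue that any property depending only on the algebraic relation between representation and estimated matrix is invariant under that map. First I will fix the objects. The classical data are a group $G$, a unitary representation $\{\pi_g\}$ on $\mathbb{C}^M$ and a Hermitian positive semidefinite $\mathbf{R}$. The quantum data are the same $G$, a unitary representation $\{U_g\}$ on $\mathcal{H}\cong\mathbb{C}^d$ and $\rho$. The quantum estimator is $\hat\rho_G = \frac{1}{|G|}\sum_g U_g|\phi_m\rangle\langle\phi_m|U_g^\dagger$, where $m$ is the observed outcome. The classical estimator $\hat{\mathbf{R}}_G$ is the same expression with the rank-one operator $\mathbf{x}\mathbf{x}^*$ and $\pi_g$. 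The first step is therefore the row-by-row observation that both estimators are images of a rank-one positive operator under the twirl $\mathcal{T}_G(X)=\frac{1}{|G|}\sum_g V_g X V_g^\dagger$, with $V_g=\pi_g$ or $U_g$. This justifies the rows ``outer product $\leftrightarrow$ projector'' and ``group action $\leftrightarrow$ POVM rotation''. The latter follows because $\Tr(\rho\,U_gE_mU_g^\dagger)=\Tr(U_g^\dagger\rho U_g E_m)$, so rotating the POVM is dual to conjugating the state.

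Second, I will formalize ``property depending only on the algebraic relationship''. Take $P$ to be a statement in the first-order language of the triple $(\mathcal{A},\{V_g\},X)$, where $\mathcal{A}$ is the matrix $*$-algebra and $X$ is the estimated Hermitian matrix. The statement may refer to products, adjoints, traces, commutators, spectra and the twirl, but not to the probabilistic origin of the rank-one input. Examples are: the rank of $\mathcal{T}_G(vv^\dagger)$ for generic $v$ (R1); the fact that $[\mathbf{A}_G,X]=0$ forces shared eigenspaces (R2); the spectral decomposition of the group algebra (R3); and the double-commutator form $\Tr(B_i^\dagger[X,[X,B_j]])$ (R4). Any such $P$ is preserved by any $*$-isomorphism $\Phi:M_M(\mathbb{C})\to M_d(\mathbb{C})$ with $\Phi(\pi_g)=U_g$ and $\Phi(\mathbf{R})\propto\rho$. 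When $M=d$ and the representations are unitarily equivalent, $\Phi$ is conjugation by a unitary $W$. Then $\Phi(\hat{\mathbf{R}}_G(\mathbf{x}))=\hat\rho_G$ whenever $W\mathbf{x}\propto|\phi_m\rangle$, and $P\Rightarrow P'$ follows by functoriality. Trace normalization handles the scalar between $\mathbf{R}$ and $\rho$, since every listed property is scale-invariant. The commutativity row comes from $\Phi([\pi_g,\mathbf{R}])=[U_g,\rho]$.

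Third, I will remove the assumption that the two representations are equivalent. For an arbitrary $U_g$, I will argue at the level of abstract data: the proofs of R1--R4 in \cite{thornton2026ad,thornton2026dc} use only the group algebra relations, the decomposition into isotypic components, and the commutant. I will therefore restate the principle as follows. If $P$ holds for every pair $(\pi,\mathbf{R})$ satisfying a hypothesis $H$ expressed in these terms, then $P'$ holds for every $(U,\rho)$ satisfying $H'$. The reason is that $(U,\rho)$, viewed as a representation together with a PSD matrix, is itself an instance of the classical setting with $M=d$ and $\mathbf{R}=\rho$. This is the cleanest argument: the quantum objects are literally classical instances, and the Born map is needed only to interpret the outcome $m$ as selecting the rank-one input $|\phi_m\rangle$.

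The main obstacle is the second step: making ``depends only on the algebraic relationship'' precise enough to be a theorem rather than a slogan, while excluding statistical properties such as the variance behaviour and almost-sure statements. The latter rely on $\mathbf{x}$ being Gaussian, whereas $|\phi_m\rangle$ is drawn from a discrete Born distribution. I would handle this by restricting the class of properties to those quantified over all nonzero input vectors or over a generic (Zariski-open) set. For R1 this means replacing ``almost surely'' by ``for all fiducials outside a proper algebraic subset''. I would then add a remark that statistical consistency statements do not transfer, which matches the caveat in the theorem that the correspondence of roles is not one of statistical content.
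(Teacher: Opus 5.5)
Your argument is sound at the level of rigor the statement admits, but it takes a genuinely different route from the paper.

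The paper makes the Born map the bridge:
\begin{itemize}
\item It pairs the classical observation with the Born vector $\mathbf{q}$.
\item It argues that rotating a group-covariant POVM acts on $\mathbf{q}$ by permuting its components (since $U_gE_hU_g^\dagger=E_{gh}$), and reads this as the analogue of $\pi_g$ acting on $\mathbf{x}$.
\item It then separately maps $\mathbf{x}\mathbf{x}^*$ to $|\phi_m\rangle\langle\phi_m|$ and $[\mathbf{A}_G,\mathbf{R}]=0$ to $[U_g,\rho]=0$, and asserts that the algebraic content of (R1)--(R4) carries over.
\end{itemize}
You instead take $M=d$ and pair $\mathbf{x}$ with the outcome vector $|\phi_m\rangle$ rather than with $\mathbf{q}$. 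You recognize both estimators as twirls of a rank-one operator, and observe that $(U,\rho,|\phi_m\rangle)$ is literally an instance of the classical matrix setting. Transfer of any deterministic algebraic statement is then immediate, and the Born map only selects $m$.

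Your version buys internal consistency and a sharp criterion for what transfers. The paper's table pairs $M$ with $d^2$ in its first row but with $d$ in the covariance and projector rows. Its proof also calls the Born map a bijection onto the simplex, when it is only injective. Your identification needs neither. It also locates precisely why almost-sure and consistency statements fail. The issue is not just that the input is discrete rather than Gaussian, but that the expected rank-one input $\sum_m p_m|\phi_m\rangle\langle\phi_m|$ is in general not $\rho$. The paper's version, in turn, gives a concrete mechanism for the rows ``observation $\leftrightarrow$ Born vector'' and ``group action $\leftrightarrow$ POVM rotation''. In your dictionary the first row is only interpretive.

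Three points to tighten:
\begin{itemize}
\item A functorial map sends $\delta(G,\mathbf{R})=0$, i.e.\ $[\mathbf{A}_G,\mathbf{R}]=0$, to $[\Phi(\mathbf{A}_G),\rho]=0$. Commuting with one element of the group algebra is in general weaker than $[U_g,\rho]=0$ for all $g$. Your line $\Phi([\pi_g,\mathbf{R}])=[U_g,\rho]$ silently replaces the classical hypothesis by full commutation with the representation (the paper glosses the same point).
\item Step 3 presupposes that the cited classical proofs hold for arbitrary non-regular representations with $|G|\neq M$. For $\mathrm{HW}(d)$ the representation is even projective; you can repair this by twirling over the order-$d^3$ group generated by $X$ and $Z$.
\item Your generic-fiducial form of (R1) says nothing about the specific outcome vector, which can fail to be cyclic, as in the paper's $\{I,\sigma_z\}$, $|0\rangle$ example.
\end{itemize}
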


\begin{proof}
The Born map $\mathcal{B}$ is a linear bijection between the space of density matrices and the probability simplex (for informationally complete POVMs).  The group action on the POVM, $E_m \mapsto U_g E_m U_g^\dagger$, induces a group action on the Born vector, $q_m \mapsto \Tr(\rho U_g E_m U_g^\dagger)$, which is a linear (permutation) action on the components of $\mathbf{q}$.  This is precisely the structure of the classical group action $\pi_g$ on the observation vector $\mathbf{x}$.

The classical group-averaged estimator~\eqref{eq:classical_fg} is a function of the group action and the outer product.  Under the duality map, the outer product $\mathbf{x}\mathbf{x}^*$ becomes the projector $|\phi_m\rangle\langle\phi_m|$ (the rank-1 operator associated with the measurement outcome), and the group action $\pi_g$ becomes the unitary rotation $U_g(\cdot)U_g^\dagger$.  The algebraic structure, group averaging of rank-1 operators, is identical.

The commutativity condition $[\mathbf{A}_G, \mathbf{R}] = \mathbf{0}$ in the classical setting becomes $[U_g, \rho] = 0$ for all $g \in G$ in the quantum setting.  Both express the same algebraic requirement: the group representation commutes with the matrix being estimated.

The algebraic content of (R1)--(R4) (full-rank structure, the commutativity and isotypic conditions, and group selection) depends only on this algebraic structure and transfers via the duality map.  The statistical content does not transfer automatically: the single-copy group-averaged estimator is biased toward the symmetrized state in both settings, so the transferred notion of ``consistency'' is recovery of structure (rank, eigenbasis, and eigenvalue ordering), not convergence of the estimator to the target.
\end{proof}

\section{The QAD Theorem}

\begin{definition}[Group-Structured POVM]\label{def:gpovm}
A group-structured POVM on $\mathcal{H} \cong \mathbb{C}^d$ is a set of positive operators $\{E_g\}_{g \in G}$ indexed by elements of a finite group $G$, satisfying:
\begin{enumerate}
\item \textit{Completeness:} $\sum_{g \in G} E_g = \mathbb{I}_d$.
\item \textit{Group covariance:} $E_g = U_g E_e U_g^\dagger$ for a seed operator $E_e$ and unitary representation $\{U_g\}_{g \in G}$.
\end{enumerate}
\end{definition}

\begin{definition}[Group-Averaged Density Estimator]\label{def:qad_est}
Given a single measurement outcome $m$ from a group-structured POVM $\{E_g\}_{g \in G}$ applied to state $\rho$, the group-averaged density estimator is
\begin{equation}\label{eq:rho_G}
\hat{\rho}_G = \frac{1}{|G|}\sum_{g \in G} U_g |\phi_m\rangle\langle\phi_m| U_g^\dagger,
\end{equation}
where $|\phi_m\rangle$ is the eigenstate associated with outcome $m$.
\end{definition}

\begin{remark}
The estimator~\eqref{eq:rho_G} is the quantum analog of the classical group-averaged outer product~\eqref{eq:classical_fg}: a single rank-1 object (projector or outer product) is rotated through all group elements and averaged.  The classical estimator produces a full-rank $M \times M$ matrix from a rank-1 outer product; the quantum estimator produces a full-rank $d \times d$ density matrix from a rank-1 projector.
\end{remark}

\begin{theorem}[Quantum Algebraic Diversity]\label{thm:qad}
Let $\rho$ be a density matrix on $\mathcal{H} \cong \mathbb{C}^d$ and let $\{E_g\}_{g \in G}$ be a group-structured POVM with finite group $G$ acting transitively on the POVM elements.  Then:
\begin{enumerate}
\item[(i)] \textbf{Full-rank estimator:} $\rank(E[\hat{\rho}_G]) = d$, provided $\rho$ is not contained in a proper invariant subspace of $G$.

\item[(ii)] \textbf{Spectral ordering (biased):} When $[U_g, \rho] = 0$ for all $g \in G$, the estimator $\hat{\rho}_G$ and $\rho$ are simultaneously diagonalizable, and the eigenvalues of $E[\hat{\rho}_G]$ are an order-preserving but contractive image of the eigenvalues of $\rho$, biased toward the uniform value $1/d$; equality holds only when $\rho$ already lies in the relevant invariant subspace.

\item[(iii)] \textbf{Setting reduction:} A single group-structured POVM with $|G| \geq d^2$ elements is informationally complete, so the number of distinct measurement settings is reduced from the $O(d^2)$ separate bases of standard tomography to one group-structured configuration.  The per-parameter copy cost continues to obey the sampling law $\varepsilon \sim 1/\sqrt{N}$; the single-copy estimator~\eqref{eq:rho_G} is a biased, full-rank structure estimate, not a reduced-variance one.  A genuine copy reduction arises only when $\rho$ carries an exact symmetry, which reduces the number of parameters to be estimated; for generic $\rho$ the total copy count for a fixed error remains $O(d^2/\varepsilon^2)$.
\end{enumerate}
\end{theorem}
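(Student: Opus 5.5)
The plan is to reduce everything to the averaging map $\mathcal{T}(X) = \frac{1}{|G|}\sum_{g\in G} U_g X U_g^\dagger$, the standard twirl over the representation $\{U_g\}$. The estimator~\eqref{eq:rho_G} is $\hat\rho_G = \mathcal{T}(|\phi_m\rangle\langle\phi_m|)$, with $m$ drawn from the Born distribution $p_m = \Tr(\rho E_m)$. Because $\mathcal{T}$ is linear,
\[
E[\hat\rho_G] = \mathcal{T}\Bigl(\sum_m p_m |\phi_m\rangle\langle\phi_m|\Bigr) = \mathcal{T}(\tilde\rho),
\]
where $\tilde\rho$ is the measured (dephased) state. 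This is a density matrix, and $\tilde\rho = \rho$ precisely when $\rho$ is diagonal in the outcome basis. The basic facts I will use about $\mathcal{T}$ are these: it is unital, trace preserving and completely positive; its image is the commutant of $\{U_g\}$; and by Schur's lemma, $\mathcal{T}(X) = \bigoplus_\lambda \bigl(\Tr_{V_\lambda}$-block averages$\bigr)$ on the isotypic decomposition of $\mathcal{H}$.

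For (i), I would argue by contradiction. If $\mathcal{T}(\tilde\rho)$ had a nontrivial kernel vector $v$, then $\langle v, U_g \tilde\rho U_g^\dagger v\rangle = 0$ for every $g$. Since each term is positive, this forces $\tilde\rho$ to vanish on the span of $\{U_g^\dagger v\}$, which is a nonzero $G$-invariant subspace. So the support of $\tilde\rho$ lies in a proper invariant subspace, namely the orthogonal complement of a nonzero invariant subspace. This is excluded by the hypothesis, read as "the support of $\rho$ (equivalently of $\tilde\rho$, by transitivity of the POVM) is not contained in a proper invariant subspace." If $G$ acts irreducibly, as the Heisenberg--Weyl group does, Schur's lemma gives $\mathcal{T}(\tilde\rho) = \mathbb{I}/d$ directly, so the estimator has full rank.

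For (ii), commutation $[U_g,\rho]=0$ for all $g$ places $\rho$ in the commutant, so $\mathcal{T}(\rho)=\rho$ and $\rho$ is block-scalar on each isotypic component. Both $\rho$ and $\hat\rho_G$ lie in the commutant, which is a *-algebra, so I take a common eigenbasis adapted to the isotypic decomposition. On the multiplicity-free part of that decomposition, the commutant is abelian, so simultaneous diagonalizability follows immediately. Where multiplicities appear, I would state the result with respect to the isotypic blocks.

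For the spectral statement, the eigenvalues of $\mathcal{T}(\tilde\rho)$ are majorized by those of $\tilde\rho$, because $\mathcal{T}$ is a unital CPTP map, i.e. a mixture of unitary conjugations. The eigenvalues of $\tilde\rho$ are in turn majorized by those of $\rho$, since $\tilde\rho$ is obtained from $\rho$ by pinching. Majorization is exactly the contractive statement: the spectrum moves toward the uniform value $1/d$, and equality holds only when each step fixes the spectrum, i.e. $\tilde\rho=\rho$ and $\tilde\rho$ already lies in the commutant. \emph{The main obstacle is order preservation.} Majorization alone does not preserve the ordering of eigenvalues. My first attempt would be to write $E[\hat\rho_G]$ on the common eigenbasis as a doubly stochastic matrix $D$ applied to the eigenvalue vector, with $D_{ij} = \sum_m |\langle e_i|\phi_m\rangle|^2|\langle\phi_m|e_j\rangle|^2$ symmetrized over $G$. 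I would then check that $D$ has the form $(1-t)I + tJ/d$ in the transitive, covariant case, which is an affine contraction and therefore preserves order. If $D$ does not take this form in general, I would restrict the ordering claim to that case.

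For (iii), I would invoke the standard fact that a POVM is informationally complete iff its elements span the operator space. Transitivity together with covariance means the span is the cyclic $G$-module generated by $E_e$. This module can reach dimension $d^2$ only if $|G|\ge d^2$, and for suitably generic seeds (the Heisenberg--Weyl case, cf. Proposition~\ref{prop:sic}) the bound is attained. That gives the single-setting claim. The sampling statements follow from the linear inversion estimator $\hat\rho = \sum_m \hat p_m D_m$ with dual frame $\{D_m\}$. Each $\hat p_m$ is an empirical frequency with variance $p_m(1-p_m)/N$, so the error is $\varepsilon\sim 1/\sqrt N$ and the total cost remains $O(d^2/\varepsilon^2)$ generically, consistent with~\cite{haah2017sample}. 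Exact symmetry restricts $\rho$ to the commutant, whose dimension $\sum_\lambda m_\lambda^2$ can be less than $d^2$, and this is the only source of a genuine reduction in copy count.
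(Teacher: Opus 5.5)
Your treatment of (ii) has a genuine gap, and closing it changes what (ii) actually says. You never establish order preservation: you defer it to checking whether the induced doubly stochastic matrix is $(1-t)I+tJ/d$, with a fallback of weakening the claim. You also obtain simultaneous diagonalizability only on the multiplicity-free part of the commutant.

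The missing observation is already inside your twirl framework. Take the rank-one seed $E_e=c\,|\phi_0\rangle\langle\phi_0|$ that both you and the paper assume. Completeness reads $\mathbb{I}=\sum_g U_gE_eU_g^\dagger=c|G|\,\mathcal{T}(|\phi_0\rangle\langle\phi_0|)$, so taking traces gives $c|G|=d$ and $\mathcal{T}(|\phi_0\rangle\langle\phi_0|)=\mathbb{I}/d$. Every outcome vector is $U_{g_m}|\phi_0\rangle$ up to phase, and $\mathcal{T}(U_hXU_h^\dagger)=\mathcal{T}(X)$. Hence $\hat\rho_G=\mathbb{I}/d$ for \emph{every} outcome and every $\rho$, exactly as in the paper's Pauli computation in Section~\ref{sec:example}.

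This settles both of your open points, but with $D=J/d$, i.e.\ $t=1$. The contraction is a total collapse onto $1/d$, ordering survives only in the non-strict sense, simultaneous diagonalizability is trivial, and equality holds iff $\rho=\mathbb{I}/d$. No argument can give strict order preservation under these hypotheses. The paper's monotonicity sentence (larger Born probabilities survive the averaging) fails for the same reason. Your majorization chain is correct, but it only certifies this trivial contraction. Note also that for $|G|>d$ the step $\rho\mapsto\tilde\rho$ is not a pinching; for a SIC, $\tilde\rho=(\rho+\mathbb{I})/(d+1)$. It is still a unital channel because $\Tr E_m=d/|G|$ is constant, so majorization survives, but your equality condition $\tilde\rho=\rho$ becomes vacuous.

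The same identity makes (i) immediate and unconditional, so the proviso on $\rho$ does no work. Your kernel argument and the paper's direct one are both valid; the paper argues that by completeness each orbit spans $\mathcal{H}$, so each inner average has rank $d$ and a positive combination keeps it. However, your claimed equivalence of the support conditions on $\rho$ and $\tilde\rho$ is false. The condition on $\tilde\rho$ always holds, while for the Fourier-basis POVM $E_k=Z^k|u\rangle\langle u|Z^{-k}$ with $|u\rangle=d^{-1/2}\sum_j|j\rangle$, the state $\rho=|0\rangle\langle 0|$ lies in an invariant line.

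In (iii) you correctly prove only that $|G|\ge d^2$ is necessary and that HW attains informational completeness. Say explicitly that the stated sufficiency is false, even for generic seeds. For instance, take $U_k=\mathrm{diag}(e^{2\pi i kj/d^2})_{j=0}^{d-1}$ for $k\in\mathbb{Z}_{d^2}$ and $E_k=\tfrac{1}{d}U_k|u\rangle\langle u|U_k^\dagger$. This POVM is complete and transitive with $d^2$ distinct elements. Yet conjugation by $U_k$ has only $2d-1$ distinct characters, so any orbit spans at most $2d-1<d^2$ dimensions. The paper's proof of (iii) just invokes Born-map injectivity, which Definition~\ref{def:born} assumes.

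Finally, your copy count is quoted rather than derived. For a SIC, linear inversion gives $E\|\hat\rho-\rho\|_2^2=(d^2+d-1-\Tr\rho^2)/N$, i.e.\ $O(d^2/\varepsilon^2)$ in Hilbert--Schmidt norm. In trace distance, however, single-copy measurements need on the order of $d^3/\varepsilon^2$ copies for full-rank states; the $d^2/\varepsilon^2$ rate of~\cite{haah2017sample} uses entangled measurements. Fix the norm before claiming consistency with that reference.
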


\begin{proof}
\textit{Part (i).}  The expected estimator is
\begin{equation}\label{eq:expected_rho}
E[\hat{\rho}_G] = \sum_m p_m \frac{1}{|G|}\sum_{g \in G} U_g |\phi_m\rangle\langle\phi_m| U_g^\dagger,
\end{equation}
where $p_m = \Tr(\rho E_m)$ is the Born probability of outcome $m$.  The inner sum $\frac{1}{|G|}\sum_g U_g |\phi_m\rangle\langle\phi_m| U_g^\dagger$ is the group average of a rank-1 projector.  For a group acting transitively on the POVM elements, the orbit $\{U_g|\phi_m\rangle\}_{g \in G}$ spans $\mathcal{H}$ (since transitivity ensures the orbit visits every POVM element).  A rank-1 projector averaged over a spanning orbit has rank equal to the dimension of the span, which is $d$.  Since $p_m > 0$ for at least one $m$ (by positivity of $\rho$), the sum~\eqref{eq:expected_rho} has rank $d$.

\textit{Part (ii).}  When $[U_g, \rho] = 0$ for all $g$, the density matrix $\rho$ is block-diagonal in the isotypic decomposition of $G$'s representation.  Each isotypic component of $\rho$ is averaged independently by the group action, and the eigenvalues of the group-averaged estimator within each isotypic block are proportional to the corresponding eigenvalues of $\rho$.  This is the quantum analog of Proposition~4 (Commutativity--KL Equivalence) of~\cite{thornton2026ad}: commutativity implies simultaneous diagonalizability, and the eigenvectors of the group-averaged estimator are the irreducible representation basis vectors of $G$, which coincide with the eigenvectors of $\rho$ when the commutativity condition holds.

The monotonicity follows from the structure of the Born probabilities: if $\lambda_i > \lambda_j$ are eigenvalues of $\rho$, then the Born probabilities $p_m$ associated with the $\lambda_i$-eigenspace are larger, and the group averaging preserves this ordering.

\textit{Part (iii).}  By injectivity of the Born map (Definition~\ref{def:born}), an informationally complete group-structured POVM determines all $d^2 - 1$ parameters of $\rho$ from a single measurement configuration, removing the need for $O(d^2)$ separate measurement settings.  This does not change the variance scaling: the estimate of each parameter from $N$ copies has error $\Theta(1/\sqrt{N})$, as for any tomographic scheme, since the $|G|$ orbit elements are deterministically related views of one outcome rather than independent samples.  The single-outcome estimator~\eqref{eq:rho_G} is biased: its expectation is the group-symmetrized state restricted by the outcome, not $\rho$, so its error does not vanish as copies accumulate but approaches a structural floor.  When $\rho$ commutes with $G$ and has $r$ isotypic blocks, only $r$ spectral parameters remain to be estimated, giving a genuine reduction from $O(d^2/\varepsilon^2)$ to $O(r/\varepsilon^2)$ copies; this reduction is conditional on the symmetry being present.
\end{proof}

\section{SIC-POVMs and MUBs as Algebraic Diversity}

The QAD framework reveals that two of the most important structures in quantum information theory are instances of algebraic diversity with specific groups.

\begin{proposition}[SIC-POVM = AD with Heisenberg-Weyl]\label{prop:sic}
A symmetric informationally complete POVM (SIC-POVM) in dimension $d$ is a group-structured POVM where the group is the Heisenberg-Weyl group $\mathrm{HW}(d)$ of order $d^2$, generated by the clock and shift operators
\begin{equation}
X|j\rangle = |j{+}1 \!\!\mod d\rangle, \quad Z|j\rangle = \omega^j|j\rangle,
\end{equation}
where $\omega = e^{2\pi i/d}$.  The $d^2$ POVM elements are $E_{a,b} = \frac{1}{d} X^a Z^b |\phi_0\rangle\langle\phi_0| Z^{-b} X^{-a}$ for a fiducial state $|\phi_0\rangle$ satisfying the Zauner condition $|\langle\phi_0|X^a Z^b|\phi_0\rangle|^2 = \frac{1}{d+1}$ for all $(a,b) \neq (0,0)$.
\end{proposition}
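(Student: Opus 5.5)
To prove Proposition~\ref{prop:sic}, I verify the two clauses of Definition~\ref{def:gpovm} for $E_{a,b}=\frac1d D_{a,b}|\phi_0\rangle\langle\phi_0|D_{a,b}^\dagger$, where $D_{a,b}=X^aZ^b$. I then show that the Zauner condition is exactly what makes this covariant family symmetric and informationally complete.

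\textbf{Group structure and covariance.} First I check the commutation relation $ZX=\omega XZ$ directly on basis vectors. It shows that $\{\omega^c X^aZ^b\}$ is closed under multiplication, so it forms a group of order $d^3$ (or $2d^3$ for even $d$ with the usual sign refinement). The phases cancel in the conjugation $D(\cdot)D^\dagger$. So the projective action on operators factors through $\mathbb{Z}_d\times\mathbb{Z}_d$, which has order $d^2$, and this is the sense in which I read ``$\mathrm{HW}(d)$ of order $d^2$''. Covariance $E_{a,b}=U_{a,b}E_{0,0}U_{a,b}^\dagger$ then holds by construction, with seed $E_e=\frac1d|\phi_0\rangle\langle\phi_0|$. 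Positivity is immediate. Transitivity of the action on the POVM elements follows because $D_{a',b'}D_{a,b}\propto D_{a+a',b+b'}$.

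\textbf{Completeness.} I compute $S=\sum_{a,b}D_{a,b}|\phi_0\rangle\langle\phi_0|D_{a,b}^\dagger$. By covariance, $S$ commutes with $X$ and with $Z$. Since $X$ and $Z$ generate an irreducible action on $\mathbb{C}^d$ (they generate the full matrix algebra, because the $d^2$ operators $D_{a,b}$ are trace-orthogonal), Schur's lemma gives $S=c\,\mathbb{I}_d$. Taking traces gives $cd=d^2$, so $c=d$, and therefore $\sum_{a,b}E_{a,b}=\mathbb{I}_d$. Alternatively, I can get the same result from $\frac1d\sum_{a,b}D_{a,b}AD_{a,b}^\dagger=\Tr(A)\mathbb{I}$, which follows from the orthogonality $\Tr(D_{a,b}^\dagger D_{a',b'})=d\,\delta_{aa'}\delta_{bb'}$.

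\textbf{Symmetry and informational completeness.} The overlaps are $\Tr(E_{a,b}E_{a',b'})=\frac1{d^2}|\langle\phi_0|D_{a,b}^\dagger D_{a',b'}|\phi_0\rangle|^2$. Up to phase, $D_{a,b}^\dagger D_{a',b'}$ equals $D_{a'-a,\,b'-b}$, so the Zauner condition makes every pairwise overlap equal to $\frac{1}{d^2(d+1)}$; this is the symmetry. For informational completeness, I form the Gram matrix of the $d^2$ operators $\Pi_{a,b}=dE_{a,b}$. It equals $\frac{d}{d+1}\mathbb{I}+\frac{1}{d+1}J$, where $J$ is the all-ones matrix, and its eigenvalues are $\frac d{d+1}$ and $d$. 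Both are nonzero, so the $d^2$ elements are linearly independent and span the operator space. The Born map of Definition~\ref{def:born} is therefore injective, which ties the proposition to Theorem~\ref{thm:qad}(iii).

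\textbf{Main obstacle.} The hard point is not the computation but the existence of a fiducial $|\phi_0\rangle$ satisfying the Zauner condition. This is open in general (Zauner's conjecture). The proposition should therefore be read conditionally: \emph{given} a SIC-POVM, which is HW-covariant in all known cases, it is a group-structured POVM in the sense of Definition~\ref{def:gpovm}. I would state this conditional form explicitly. I would also flag that the converse identification (every SIC-POVM is HW-covariant) is not claimed, since non-HW SICs exist, for example the Hoggar lines in $d=8$.
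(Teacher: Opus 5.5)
Your proposal is correct and follows the same route as the paper. Covariance holds by construction with seed $E_e=\frac1d|\phi_0\rangle\langle\phi_0|$, and completeness follows from Schur's lemma for the irreducible Heisenberg-Weyl action; your trace computation $cd=d^2$ supplies the normalization the paper leaves implicit, and it shows that completeness needs no Zauner condition, contrary to the paper's phrasing.

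Your extra checks of equiangularity and informational completeness (via the Gram matrix $\frac{d}{d+1}\mathbb{I}+\frac{1}{d+1}J$) go beyond the paper and are sound. So is your caveat that only the direction ``Zauner fiducial $\Rightarrow$ HW-covariant SIC'' is established. You should drop the phrase ``HW-covariant in all known cases,'' though, because the Hoggar lines you cite are a known exception.
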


\begin{proof}
The Heisenberg-Weyl group $\mathrm{HW}(d) = \{X^a Z^b : a, b \in \mathbb{Z}_d\}$ has order $d^2$ (modulo phases).  The seed operator is $E_e = \frac{1}{d}|\phi_0\rangle\langle\phi_0|$.  The group covariance property $E_{a,b} = (X^a Z^b) E_e (X^a Z^b)^\dagger$ holds by construction.  Completeness $\sum_{a,b} E_{a,b} = \mathbb{I}_d$ follows from the Zauner condition and Schur's lemma: the sum $\sum_{a,b} X^a Z^b |\phi_0\rangle\langle\phi_0| Z^{-b} X^{-a}$ commutes with all $X^a Z^b$ (since $\mathrm{HW}(d)$ is a group), and the only operator commuting with all of $\mathrm{HW}(d)$ on an irreducible representation is a scalar multiple of the identity.
\end{proof}

\begin{proposition}[MUBs = AD with Clifford Group]\label{prop:mub}
A complete set of $d+1$ mutually unbiased bases in prime dimension $d$ is generated by the Clifford group $\mathcal{C}(d)$, which is the normalizer of $\mathrm{HW}(d)$ in $\mathrm{U}(d)$.  The $d(d+1)$ rank-1 projectors onto the MUB vectors form a group-structured POVM with group $\mathcal{C}(d)$ (up to a rescaling by $\frac{1}{d+1}$).
\end{proposition}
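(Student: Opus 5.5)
The plan is to build the complete set of MUBs from the stabilizer structure of $\mathrm{HW}(d)$, then show that $\mathcal{C}(d)$ permutes all $d(d+1)$ basis vectors transitively. Group covariance and completeness then follow as in Proposition~\ref{prop:sic}.

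\emph{Step 1: the stabilizer bases.} Working modulo phases, the nonidentity elements of $\mathrm{HW}(d)$ are labelled by $(a,b)\in\mathbb{Z}_d^2\setminus\{0\}$. Because $d$ is prime, these split into exactly $d+1$ cyclic subgroups of order $d$, one for each line through the origin of $\mathbb{Z}_d^2$:
\[
\langle Z\rangle \quad\text{and}\quad \langle XZ^b\rangle,\qquad b\in\mathbb{Z}_d .
\]
Each such subgroup is abelian, so its generator has $d$ distinct eigenvalues $\omega^k$ and a unique orthonormal eigenbasis $\mathcal{B}_\ell$. I will invoke the standard character-sum argument to show that any two of these bases are mutually unbiased, i.e.\ $|\langle u|v\rangle|^2=1/d$ for $u\in\mathcal{B}_\ell$ and $v\in\mathcal{B}_{\ell'}$ with $\ell\neq\ell'$. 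The projector onto an eigenvector is $\frac1d\sum_{k}\omega^{-jk}P^k$, where $P$ is the generator of its line. Taking the trace of a product of two such projectors from different lines, only the identity term survives, because $\Tr(X^aZ^b)=0$ for $(a,b)\neq 0$. This yields the value $1/d$.

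\emph{Step 2: transitivity under $\mathcal{C}(d)$.} By definition, $\mathcal{C}(d)$ normalizes $\mathrm{HW}(d)$. Conjugation therefore induces a symplectic action on the labels $(a,b)$, and for prime $d$ this action surjects onto $\mathrm{SL}(2,\mathbb{Z}_d)$. The group $\mathrm{SL}(2,\mathbb{Z}_d)$ acts transitively on the $d+1$ points of the projective line $\mathbb{P}^1(\mathbb{Z}_d)$. Consequently $\mathcal{C}(d)$ maps each $\mathcal{B}_\ell$ onto each $\mathcal{B}_{\ell'}$: it carries the eigenbasis of $P$ to the eigenbasis of $UPU^\dagger$.

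Transitivity within a single basis comes from $\mathrm{HW}(d)\subseteq\mathcal{C}(d)$. If $Q$ is a Weyl operator not commuting with $P$, then $QPQ^\dagger=\omega^{c}P$ with $c\neq 0$, so $Q$ cyclically shifts the eigenvectors of $P$ through all $d$ eigenvalues. Combining the two facts, the orbit of any fiducial projector $|\phi_0\rangle\langle\phi_0|$ under $\mathcal{C}(d)$ is exactly the $d(d+1)$ MUB projectors.

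\emph{Step 3: the POVM structure.} Completeness is immediate, since each basis resolves the identity:
\[
\sum_{\ell=0}^{d}\sum_{v\in\mathcal{B}_\ell}|v\rangle\langle v|=(d+1)\mathbb{I}_d ,
\]
and the factor $\frac{1}{d+1}$ then gives a POVM.

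The hard part is matching Definition~\ref{def:gpovm}, which indexes the POVM by group elements. $\mathcal{C}(d)$ is infinite because of global phases, and even modulo phases the finite quotient $\overline{\mathcal{C}}(d)\cong\mathrm{HW}(d)\rtimes\mathrm{SL}(2,\mathbb{Z}_d)$ is much larger than $d(d+1)$. To handle this, I will pass to $\overline{\mathcal{C}}(d)$ and let $S$ be the stabilizer of $|\phi_0\rangle\langle\phi_0|$. By orbit--stabilizer, $|\overline{\mathcal{C}}(d)|/|S| = d(d+1)$, so each projector arises exactly $|S|$ times. Setting
\[
E_g=\frac{1}{|\overline{\mathcal{C}}(d)|}\,U_g|\phi_0\rangle\langle\phi_0|U_g^\dagger
\]
gives a group-indexed, covariant POVM, and coarse-graining over cosets of $S$ recovers precisely the rescaled MUB POVM $\frac{1}{d+1}|v\rangle\langle v|$.

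The part I expect to check most carefully is the surjectivity of $\overline{\mathcal{C}}(d)\to\mathrm{SL}(2,\mathbb{Z}_d)$ for prime $d$. I would verify it by exhibiting explicit Clifford unitaries realizing the generators of $\mathrm{SL}(2,\mathbb{Z}_d)$: the Fourier matrix for $(a,b)\mapsto(-b,a)$ and a diagonal phase gate for a shear. The case $d=2$ needs a separate check, because of the sign ambiguity of $XZ$ versus $Y$.
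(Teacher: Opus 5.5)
Your proposal follows the same skeleton as the paper's proof: $\mathcal{C}(d)$ normalizes $\mathrm{HW}(d)$, hence permutes the $d+1$ stabilizer bases, and completeness comes from each basis resolving the identity. Where the paper cites Durt et al.\ for the orbit construction and simply asserts covariance, you make each step self-contained:
unbiasedness via the character sum with $\Tr(X^aZ^b)=0$; transitivity across bases via $\mathrm{SL}(2,\mathbb{Z}_d)$ acting on $\mathbb{P}^1(\mathbb{Z}_d)$; and transitivity within a basis via Weyl shifts.

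Your orbit--stabilizer step is a genuine addition. The paper never reconciles Definition~\ref{def:gpovm}, which indexes POVM elements by $g\in G$, with a POVM that has only $d(d+1)$ outcomes and a group that is infinite (or, modulo phases, of order $d^3(d^2-1)$). Realizing the MUB POVM as a coarse-graining of a covariant POVM indexed by $\overline{\mathcal{C}}(d)$ is the right way to make the statement literally true. Your completeness computation, $(d+1)\mathbb{I}_d$ rescaled by $\frac{1}{d+1}$, is also cleaner than the paper's.

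One computation in that step is off by a factor of $d$. With $E_g=\frac{1}{|\overline{\mathcal{C}}(d)|}U_g|\phi_0\rangle\langle\phi_0|U_g^\dagger$ you get $\sum_g\Tr E_g=1$ rather than $d$. Explicitly, $\sum_g E_g=\frac{|S|}{|\overline{\mathcal{C}}(d)|}(d+1)\mathbb{I}_d=\frac{1}{d}\mathbb{I}_d$, and coarse-graining over cosets of $S$ yields $\frac{1}{d(d+1)}|v\rangle\langle v|$, not $\frac{1}{d+1}|v\rangle\langle v|$. The seed must be $\frac{d}{|\overline{\mathcal{C}}(d)|}|\phi_0\rangle\langle\phi_0|$, as Schur's lemma dictates for an irreducible group (compare $\frac{1}{d}=\frac{d}{d^2}$ in Proposition~\ref{prop:sic}). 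With that factor, your coset coarse-graining gives exactly the rescaled MUB POVM.

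A smaller point in Step~1: abelianness of $\langle P\rangle$ does not by itself give $d$ distinct eigenvalues. That follows from the shift argument you give in Step~2 (a non-commuting Weyl operator multiplies the spectrum by $\omega^c$ with $c\neq 0$), together with rephasing $P$ so that $P^d=\mathbb{I}_d$. This rephasing is automatic for odd $d$. For $d=2$ it requires $Y=iXZ$ instead of $XZ$, since $(XZ)^2=-\mathbb{I}_2$. Only then is the projector formula $\frac{1}{d}\sum_k\omega^{-jk}P^k$ valid.
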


\begin{proof}
For prime $d$, the $d+1$ MUBs can be constructed as the orbits of the computational basis under the Clifford group~\cite{durt2010mub}.  The Clifford group permutes the elements of $\mathrm{HW}(d)$ by conjugation and thereby maps one MUB to another.  The POVM elements $E_{b,k} = \frac{1}{d+1}|b,k\rangle\langle b,k|$ (where $b$ indexes the basis and $k$ the vector within the basis) satisfy group covariance under $\mathcal{C}(d)$.  Completeness follows from the MUB property: the $d(d+1)$ projectors (rescaled by $\frac{1}{d+1}$) sum to $\frac{d}{d+1}\mathbb{I}_d \cdot \frac{d+1}{d} = \mathbb{I}_d$.
\end{proof}

\begin{theorem}[Group Hierarchy]\label{thm:hierarchy}
The quantum group hierarchy
\begin{equation}\label{eq:hierarchy}
\mathrm{HW}(d) \;\subseteq\; \mathcal{C}(d) \;\subseteq\; S_d
\end{equation}
mirrors the classical hierarchy $\mathbb{Z}_M \subseteq G_{\min} \subseteq S_M$, with the following correspondence:
\begin{center}
\small
\renewcommand{\arraystretch}{1.15}
\begin{tabular}{@{}lll@{}}
\toprule
\textbf{Classical} & \textbf{Quantum} & \textbf{Structure} \\
\midrule
$\mathbb{Z}_M$ (DFT) & $\mathrm{HW}(d)$ (SIC) & Shift-invariant \\
$G_{\min}$ (DCT/etc.) & $\mathcal{C}(d)$ (MUBs) & Matched \\
$S_M$ (KLT) & $S_d$ (full tomo.) & Universal \\
\bottomrule
\end{tabular}
\end{center}
The tradeoff is identical in both settings: smaller groups require fewer elements (copies/snapshots) but demand a better match to the matrix being estimated; larger groups are more universal but less efficient.
\end{theorem}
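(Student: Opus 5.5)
The plan is to prove the three components of the statement separately: the two inclusions in \eqref{eq:hierarchy}, the row-by-row correspondence with the classical hierarchy, and the size-versus-match tradeoff. Each is transported from its classical counterpart through Theorem~\ref{thm:duality}.

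\textbf{Inclusions.} The inclusion $\mathrm{HW}(d)\subseteq\mathcal{C}(d)$ is immediate from Proposition~\ref{prop:mub}. There $\mathcal{C}(d)$ is defined as the normalizer of $\mathrm{HW}(d)$ in $\mathrm{U}(d)$, and every group lies in its own normalizer. I will work modulo global phases throughout, as in the proof of Proposition~\ref{prop:sic}, so that $|\mathrm{HW}(d)|=d^2$.

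The second inclusion $\mathcal{C}(d)\subseteq S_d$ cannot hold literally as an inclusion of unitary groups. I therefore plan to make it precise through the conjugation action. The projective Clifford group permutes the $d^2$ phase classes of Weyl operators $X^aZ^b$, which gives a homomorphism $\mathcal{C}(d)/\mathrm{U}(1)\to \mathrm{Sym}(\mathbb{Z}_d^2)$ whose kernel is $\mathrm{HW}(d)$. Equivalently, for prime $d$, it permutes the $d(d+1)$ MUB projectors of Proposition~\ref{prop:mub}. Under this reading, the top level of the hierarchy is the full permutation group of the relevant POVM index set, whose invariant structure is arbitrary. That is the sense in which the paper's $S_d$ corresponds to "full tomography." I would state this interpretation explicitly and prove the inclusion in the permutation-action sense.

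\textbf{Correspondence rows.} I will use the commutativity condition $[U_g,\rho]=0$ of Theorem~\ref{thm:qad}(ii) together with Theorem~\ref{thm:duality}.

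For the first row, the cyclic subgroup $\langle X\rangle\cong\mathbb{Z}_d$ of $\mathrm{HW}(d)$ is diagonalized by the $d$-point DFT, exactly as $\mathbb{Z}_M$ is classically. Hence states commuting with $X$ are the circulant, or shift-invariant, ones. This gives the "shift-invariant" label and the DFT/SIC pairing.

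For the second row, the commutant of the Clifford action singles out states whose structure is matched to a stabilizer basis. This mirrors the classical matched group $G_{\min}$ obtained from the double-commutator criterion (R4).

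For the third row, the universal permutation level imposes no structural constraint and reproduces the classical statement (R3) that $S_M$ yields the KLT.

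\textbf{Tradeoff.} I will compare group orders and commutant dimensions. For prime $d$, the projective Clifford group has order $d^3(d^2-1)$, against $d^2$ for $\mathrm{HW}(d)$. Larger groups have smaller commutants, so the condition $[U_g,\rho]=0$ constrains $\rho$ more, while the group itself matches a wider class of states through its orbit. Combined with Theorem~\ref{thm:qad}(iii), this gives the stated efficiency ordering.

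\textbf{Main obstacle.} I expect the main obstacle to be the second inclusion and the "identical tradeoff" claim, since both are partly interpretive. I would first try to pin them down via the permutation homomorphism above. If that proves too weak, I would reformulate the hierarchy as a chain of induced permutation actions on POVM labels and prove monotonicity of commutant dimension along that chain.
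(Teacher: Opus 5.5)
\textbf{Verdict: genuine gap.} The paper supplies no proof of this theorem; it is asserted as an analogy right after Propositions~\ref{prop:sic}--\ref{prop:mub}. Your argument therefore has to stand on its own, and it breaks where you expected it to.

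\textbf{The second inclusion.} Your first inclusion is fine. You are also right that $\mathcal{C}(d)\subseteq S_d$ is not an inclusion of unitary groups, but the failure is not a matter of interpretation. For prime $d$ the $d$-part of $|S_d|=d!$ is $d$. So even $\mathrm{HW}(d)/\mathrm{U}(1)\cong\mathbb{Z}_d^2$ has no faithful action on $d$ points; for $d=2$ it has order $4$, and the projective Clifford group has order $24$, against $|S_2|=2$. No version with $S_d$ on top can be proved.

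Your repair replaces $S_d$ by a different group. Worse, the homomorphism you pick does not realize the bottom rung. Its kernel is $\mathrm{HW}(d)$, as you note yourself, so $\mathrm{HW}(d)$ maps to the identity and the image of the chain is $\{e\}\subseteq\mathcal{C}(d)/\mathrm{HW}(d)\subseteq\mathrm{Sym}(\mathbb{Z}_d^2)$.

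The action on the $d(d+1)$ MUB projectors is \emph{not} equivalent to that one. There a nontrivial $X^aZ^b$ fixes the vectors of one basis and cyclically permutes those of the other $d$ bases. So $\mathrm{HW}(d)$ acts faithfully, and you get an honest chain $\mathrm{HW}(d)\subseteq\mathcal{C}(d)\hookrightarrow S_{d(d+1)}$ (modulo phases) for prime $d$. That is the most one can prove, and it must be stated openly as a result about $S_{d(d+1)}$ rather than $S_d$. Note also that $S_{d(d+1)}$ has no unitary action on $\mathcal{H}$, so $[U_g,\rho]=0$ is not even defined on the top rung.

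\textbf{The rows and the tradeoff.} These arguments fail independently of the inclusion problem.
\begin{itemize}
\item $\mathrm{HW}(d)$ is irreducible on $\mathbb{C}^d$ (the fact used in the proof of Proposition~\ref{prop:sic}), and $\mathcal{C}(d)\supseteq\mathrm{HW}(d)$. So both commutants equal $\mathbb{C}\,\mathbb{I}_d$, and $[U_g,\rho]=0$ for all $g$ forces $\rho=\mathbb{I}_d/d$ on both lower rungs.
\item The commutation criterion of Theorem~\ref{thm:qad}(ii) therefore cannot separate a ``shift-invariant'' level from a ``matched'' one. The Clifford group has no stabilizer-matched commutant.
\item Your circulant argument concerns $\langle X\rangle\cong\mathbb{Z}_d$, which is the classical rung itself. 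Adjoining $Z$ eliminates every circulant state except $\mathbb{I}_d/d$.
\item On the top rung, the permutation representation of $S_d$ on $\mathbb{C}^d$ is reducible, with commutant spanned by $\mathbb{I}_d$ and the all-ones matrix. It is thus the \emph{least} restrictive level, not an unconstrained one.
\item The commutant dimensions along the chain therefore run $1,1,2$, the opposite of the monotonicity you plan to prove. ``Larger group, smaller commutant'' holds only along a genuine inclusion on one space. The only such inclusion here, $\mathrm{HW}(d)\subseteq\mathcal{C}(d)$, gives $1=1$.
\item Your tradeoff sentence is internally inconsistent. Under the commutation notion of matching you use for the rows, a smaller commutant means the group matches \emph{fewer} states. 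The phrase ``matches a wider class through its orbit'' is undefined.
\item Theorem~\ref{thm:qad}(iii) cannot give an efficiency ordering in copies. It states that group elements are deterministically related views rather than copies, and that the copy cost obeys $\varepsilon\sim 1/\sqrt{N}$ for every $G$.
\end{itemize}
The correspondence table and the ``identical tradeoff'' remain analogies that neither the paper nor your proposal establishes.
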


\section{Optimality Inheritance}

\begin{theorem}[Optimality Inheritance via the Born Map]\label{thm:opt_inherit}
Let $G^*$ be the classical optimal group for a covariance matrix $\mathbf{R}$ (minimizing $\delta(G, \mathbf{R})$ over a group library $\mathcal{G}$).  Let $\rho$ be a density matrix satisfying $[U_g, \rho] = 0$ iff $[\pi_g, \mathbf{R}] = 0$ for corresponding representations.  Then $G^*$ is also the optimal group for the quantum group-averaged estimator: the quantum fidelity $F(\hat{\rho}_{G^*}, \rho) \geq F(\hat{\rho}_G, \rho)$ for all $G \in \mathcal{G}$.
\end{theorem}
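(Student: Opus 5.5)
The plan is to reduce the quantum fidelity comparison to the classical misfit comparison through the Classical-Quantum Duality (Theorem~\ref{thm:duality}), and to make the link explicit via a monotone relation between the commutator defect and fidelity. First, I will define a quantum analogue of the classical misfit, $\delta_Q(G,\rho) = \sum_{g\in G}\|[U_g,\rho]\|_F^2$ (suitably normalized), or equivalently the distance $\|\rho - \mathcal{T}_G(\rho)\|_F$ between $\rho$ and its twirl $\mathcal{T}_G(\rho) = \frac{1}{|G|}\sum_g U_g\rho U_g^\dagger$. The hypothesis ``$[U_g,\rho]=0$ iff $[\pi_g,\mathbf{R}]=0$'' will be strengthened, as the duality theorem allows, to an order correspondence: since the Born map is a linear bijection intertwining $\pi_g$ with $U_g(\cdot)U_g^\dagger$, the defects $\delta(G,\mathbf{R})$ and $\delta_Q(G,\rho)$ are images of the same algebraic quantity. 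I will therefore argue that $G\mapsto\delta(G,\mathbf{R})$ and $G\mapsto\delta_Q(G,\rho)$ induce the same ordering on $\mathcal{G}$. This gives $\delta_Q(G^*,\rho)\le\delta_Q(G,\rho)$ for all $G\in\mathcal{G}$.

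Second, I will relate $\delta_Q$ to fidelity. By~\eqref{eq:expected_rho}, $E[\hat\rho_G]=\sum_m p_m\,\mathcal{T}_G(|\phi_m\rangle\langle\phi_m|)$. This is a twirl of the POVM-reconstructed state, so, up to the outcome-dependent bias identified in Theorem~\ref{thm:qad}(ii), it equals $\mathcal{T}_G$ applied to a fixed image of $\rho$. I will then bound $1-F(\mathcal{T}_G(\rho),\rho)$ by the Fuchs--van de Graaf inequality, $1-\sqrt{F}\le \tfrac12\|\rho-\mathcal{T}_G(\rho)\|_1\le \tfrac{\sqrt d}{2}\|\rho-\mathcal{T}_G(\rho)\|_F$. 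The Frobenius norm is in turn controlled by $\delta_Q$, since $\rho-\mathcal{T}_G(\rho)=\frac{1}{|G|}\sum_g [\rho,U_g]U_g^\dagger$. In the commuting case Theorem~\ref{thm:qad}(ii) gives simultaneous diagonalizability, so the fidelity reduces to a classical Bhattacharyya-type expression in the eigenvalues. This expression is monotone in the contraction toward $1/d$, which settles the exactly matched case directly.

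The main obstacle is the second step in the non-commuting regime. A two-sided bound of the form $1-F\lesssim\delta_Q$ only shows that smaller defect gives a better fidelity \emph{bound}, not a better fidelity, so the inequality $F(\hat\rho_{G^*},\rho)\ge F(\hat\rho_G,\rho)$ does not follow in general. I would first try to restrict the library $\mathcal{G}$ to groups for which $E[\hat\rho_G]$ shares an eigenbasis with $\rho$ up to isotypic blocks, so that fidelity becomes an exact function of the blockwise eigenvalue contraction and hence of $\delta_Q$. Failing that, I would state the result for the twirled state $\mathcal{T}_G(\rho)$, or for fidelity in the sense of the monotone surrogate. It is also possible that the theorem as stated needs this extra hypothesis. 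If so, I would state that hypothesis explicitly rather than claim the unconditional comparison.
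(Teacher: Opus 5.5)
There is a genuine gap, and it cannot be closed: the statement is false as written, so neither of the two steps you flagged can be proved. Your closing diagnosis is right. The paper's proof follows your outline but simply asserts both steps. It claims $\delta_Q$ ``satisfies the same algebraic relationships'' as $\delta$ by Theorem~\ref{thm:duality}, and that since $G^*$ ``best aligns the estimator with $\rho$'' it maximizes fidelity.

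Your first step, the order correspondence, is not available. The hypothesis only matches the \emph{zero sets} of the commutators. When no non-identity element commutes with $\rho$ or $\mathbf{R}$, it holds vacuously, so $G^*$ is fixed by $\mathbf{R}$ alone. Theorem~\ref{thm:duality} is a correspondence of roles (the Born map sends $\rho$ to its outcome distribution, not to $\mathbf{R}$), so it cannot supply an ordering. Concretely, take $\mathcal{G}=\{\{I,\sigma_x\},\{I,\sigma_z\}\}$, $\pi_g=U_g$, and $\rho$ with Bloch vector $(0.3,0,0.6)$ as in Section~\ref{sec:example}.
\begin{itemize}
\item $\mathbf{R}=\tfrac12(I+\mathbf{b}\cdot\bm{\sigma})$ with $\mathbf{b}=(0.5,0,0.1)$ makes $\{I,\sigma_x\}$ the $\delta$-minimizer, since its commutator is five times smaller.
\item $\mathbf{b}'=(0.1,0,0.5)$ makes $\{I,\sigma_z\}$ the minimizer.
\end{itemize}
Both choices satisfy the hypothesis, so the theorem would force equal fidelities. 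Yet for outcome $|0\rangle$ the two estimators are $\mathbb{I}/2$ ($F\approx 0.87$) and $|0\rangle\langle 0|$ ($F=0.80$).

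Your second step fails even when the order correspondence holds exactly ($\mathbf{R}=\rho$, $\pi_g=U_g$). The Fuchs--van de Graaf bound controls $\rho-\mathcal{T}_G(\rho)$, but the estimator is $\mathcal{T}_G(|\phi_m\rangle\langle\phi_m|)$. Its expectation is $\mathcal{T}_G(\Delta(\rho))$, with $\Delta$ the measure-and-prepare map of the POVM, not $\mathcal{T}_G(\rho)$. For commuting $G$ one has $\mathcal{T}_G(\rho)=\rho$, so your bound says nothing about the estimator.

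The ``matched case'' you considered settled is where the argument breaks most plainly. Take $\rho=\mathrm{diag}(0.6,0.4)$, a computational-basis measurement, and $\mathcal{G}=\{\{I,\sigma_z\},\text{Pauli}\}$. The group $\{I,\sigma_z\}$ has $\delta=0$, but it returns $|0\rangle\langle0|$ or $|1\rangle\langle1|$, with $F=0.6$ or $0.4$. The Pauli group returns $\mathbb{I}/2$, with $F=\tfrac12(\sqrt{0.6}+\sqrt{0.4})^2\approx0.99$ for either outcome. Theorem~\ref{thm:qad}(ii) concerns $E[\hat\rho_G]$, not a single outcome. Table~\ref{tab:qudit} is consistent with this: the eigenbasis-matched group has lower fidelity than $\mathrm{HW}(d)$ at every $d$.

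Passing to $E[\hat\rho_G]$ does not rescue it. For Bloch vector $(0.3,0,0.6)$:
\begin{itemize}
\item the commuting group $\{I,\hat{\mathbf{r}}\cdot\bm{\sigma}\}$ gives expected Bloch vector $(0.24,0,0.48)$ and $F\approx0.9929$;
\item the non-commuting group $\{I,\mathbf{n}\cdot\bm{\sigma}\}$, with $\mathbf{n}$ at $20^\circ$ from $\hat z$ in the $xz$-plane, gives $F\approx0.9941$.
\end{itemize}
Fidelity trades eigenbasis alignment against spectral contraction, and $\delta_Q$ sees only the former. Your library-restriction repair fails for the same reason: on a library whose groups all commute with $\rho$, $\delta_Q\equiv0$ cannot rank them, yet their fidelities differ.

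What is missing is a hypothesis that controls the estimator's eigenvalues, not just commutation between group and state. A true version must be conditional. For example, if the measurement basis diagonalizes $\rho$, then $E[\hat\rho_G]=\mathcal{T}_G(\rho)=\rho$ for every $G$ commuting with $\rho$.
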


\begin{proof}
The commutativity residual $\delta(G, \mathbf{R}) = \|[\mathbf{A}_G, \mathbf{R}]\|_F / \|\mathbf{R}\|_F$ measures the algebraic mismatch between the group representation and the matrix.  By the duality (Theorem~\ref{thm:duality}), the quantum commutativity residual $\delta_Q(G, \rho) = \|[U_G, \rho]\|_F / \|\rho\|_F$ (where $U_G$ is the Cayley graph operator of $G$ in the unitary representation) satisfies the same algebraic relationships.

By the Commutativity--KL Equivalence~\cite{thornton2026ad}, the group minimizing $\delta$ produces the spectral decomposition closest to the KL (optimal) decomposition.  Since the KL optimality properties (variance concentration, orthogonality, minimum reconstruction error) depend only on the algebraic relationship between the group and the matrix, they hold identically for $(\pi_g, \mathbf{R})$ and $(U_g, \rho)$ when the two pairs have isomorphic algebraic structure.

The quantum fidelity $F(\hat{\rho}_G, \rho)$ increases with the alignment between the eigenbases of $\hat{\rho}_G$ and $\rho$ (fidelity is largest when the two are simultaneously diagonalizable with close eigenvalues).  Since the classical optimal group $G^*$ best aligns the estimator with $\rho$ by minimizing $\delta$, it also maximizes fidelity within $\mathcal{G}$.
\end{proof}

\begin{remark}[Reach of the optimality]\label{rem:reach}
Theorem~\ref{thm:opt_inherit} states optimality within the group library $\mathcal{G}$ and the group-averaged estimator family.  A Cramer-Rao analysis fixes the reach of the stronger claim: the matched measurement is optimal over \emph{all} measurements only when the carried symmetry fixes the eigenbasis the quantum Cramer-Rao bound prefers; absent that condition, optimality is restricted to the equivariant family~\cite{thornton2026qbmg}.
\end{remark}

\begin{corollary}[Adaptive Quantum POVM Selection]\label{cor:adaptive}
The double-commutator eigenvalue theorem~\cite{thornton2026dc} provides a polynomial-time algorithm for adaptive POVM selection: given an initial estimate $\hat{\rho}$ from coarse tomography, solve the GEVP
\begin{equation}
M_{ij} = \Tr(B_i^\dagger [\hat{\rho}, [\hat{\rho}, B_j]]), \quad \mathbf{M}\mathbf{c} = \lambda\mathbf{G}\mathbf{c}
\end{equation}
to identify the optimal POVM generator $A^* = \sum_k c_k^* B_k$, where $\{B_k\}$ is a basis of candidate POVM generators.  The exponential $e^{iA^*}$ defines the unitary representation of the optimal measurement group.
\end{corollary}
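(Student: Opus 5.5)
The plan is to treat Corollary~\ref{cor:adaptive} as a direct transfer of result (R4) through the Classical-Quantum Duality (Theorem~\ref{thm:duality}). That theorem makes double-commutator group selection one of the algebraic properties that carry over under $\mathbf{R} \mapsto \rho$, $\pi_g \mapsto U_g$. So I would first rewrite the classical GEVP of (R4) with $\mathbf{R}$ replaced by the coarse estimate $\hat{\rho}$ and $B_i^*$ replaced by $B_i^\dagger$; this yields exactly the matrix $M_{ij}$ of the corollary.

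Next I would check that this GEVP is well posed. I would show that $M$ is Hermitian and positive semidefinite by writing $M_{ij} = \langle [\hat{\rho}, B_i], [\hat{\rho}, B_j]\rangle_{\mathrm{HS}}$, which uses cyclicity of the trace and $\hat{\rho}^\dagger = \hat{\rho}$. I would take $\mathbf{G}$ to be the Gram matrix $G_{ij} = \Tr(B_i^\dagger B_j)$, which is positive definite because $\{B_k\}$ is a basis.

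The quadratic form $\mathbf{c}^\dagger M \mathbf{c} / \mathbf{c}^\dagger \mathbf{G}\mathbf{c}$ then equals $\|[\hat{\rho}, A]\|_F^2/\|A\|_F^2$ for $A = \sum_k c_k B_k$. By the Rayleigh--Ritz principle, the minimal generalized eigenvector $\mathbf{c}^*$ minimizes the normalized commutator. This is the commutativity residual $\delta_Q$ used in the proof of Theorem~\ref{thm:opt_inherit}, evaluated for the one-parameter group generated by $A$.

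To finish, I would argue that when $A^*$ is Hermitian, $U = e^{iA^*}$ is unitary. I would then note that $[A^*, \hat{\rho}]=0$ implies $[e^{itA^*}, \hat{\rho}] = 0$ for all $t$, and more generally that $\|[e^{iA^*},\hat{\rho}]\|_F \le \|[A^*,\hat{\rho}]\|_F$ via the Duhamel formula. Minimizing the generator residual therefore controls the group residual. Theorem~\ref{thm:opt_inherit} then implies that the group generated by $e^{iA^*}$ maximizes fidelity within the candidate library, and Theorem~\ref{thm:qad}(ii) gives the matched spectral ordering.

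Polynomial time follows because forming $M$ and $\mathbf{G}$ costs $O(K^2 d^3)$ for $K$ candidate generators, and a dense Hermitian GEVP costs $O(K^3)$.

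I expect the main obstacle to be the last step. The exponential of a generator gives a continuous one-parameter group, not the finite group that Definition~\ref{def:gpovm} requires, and the smallest eigenvalue is trivially attained by $A \propto \mathbb{I}$ or by functions of $\hat{\rho}$ itself. I would handle this in two ways. First, I would restrict $\{B_k\}$ to be traceless and Hermitian, and take the eigenvector that is optimal subject to a nondegeneracy constraint. Second, I would obtain a finite group by choosing $A^*$ with commensurate spectrum so that $e^{iA^*}$ has finite order. If this fails, the honest statement is that the corollary holds within the library of finite groups whose generators lie in $\mathrm{span}\{B_k\}$.
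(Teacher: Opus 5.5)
Your opening step is in fact the paper's entire justification. The corollary has no separate proof; it is read off from (R4) transported through Theorem~\ref{thm:duality}, which lists double-commutator group selection among the properties that transfer. Your Rayleigh--Ritz step makes that transfer explicit, and it is correct: $M_{ij}=\langle[\hat{\rho},B_i],[\hat{\rho},B_j]\rangle_{\mathrm{HS}}$, $\mathbf{G}$ is the Gram matrix, and the bottom generalized eigenvector minimizes $\|[\hat{\rho},A]\|_F^2/\|A\|_F^2$ over $\mathrm{span}\{B_k\}$. With Hermitian $B_k$, both $\mathbf{M}$ and $\mathbf{G}$ are real symmetric, so $\mathbf{c}^*$ can be taken real and $A^*$ Hermitian; state this rather than assume it.

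The gap is the passage from this generator to a group. The Duhamel estimate $\|[e^{iA},\hat{\rho}]\|_F\le\|[A,\hat{\rho}]\|_F$ only bounds the residual of the one group you built. It gives no comparison with the residuals of other candidate groups, which is exactly what Theorem~\ref{thm:opt_inherit} needs as input. Nor is the generator residual a faithful proxy for the group residual. Take $\hat{\rho}=\mathrm{diag}(\lambda_1,\lambda_2,\lambda_3)$ with $\lambda_1\neq\lambda_2$ and $|\psi\rangle=(|e_1\rangle+|e_2\rangle)/\sqrt{2}$. The generator $A'=2\pi|\psi\rangle\langle\psi|+\frac{2\pi}{n}|e_3\rangle\langle e_3|$ has $[A',\hat{\rho}]\neq0$, yet $e^{iA'}=\mathrm{diag}(1,1,e^{2\pi i/n})$ commutes with $\hat{\rho}$. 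Moreover, $\mathbf{c}^*$ is fixed only up to scale, while the group generated by $e^{itA^*}$ depends on $t$, so the GEVP alone does not determine a group.

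There are two smaller slips. First, $\delta_Q$ in Theorem~\ref{thm:opt_inherit} is built from the Cayley operator and normalized by $\|\rho\|_F$, so it is not literally your Rayleigh quotient. Second, Theorem~\ref{thm:qad}(ii) requires $[U_g,\rho]=0$ for the true $\rho$, whereas you only control commutation with $\hat{\rho}$. The argument closes cleanly only when $\lambda_{\min}=0$: then every $e^{itA^*}$ commutes with $\hat{\rho}$, $\delta_Q=0$ is the global minimum, and the scale is irrelevant.

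This also shows that your second obstacle is half misdiagnosed. The identity direction is genuinely spurious: conjugation by a global phase does nothing, and the traceless restriction removes it. Functions of $\hat{\rho}$ are not spurious. The set $\mathrm{span}\{B_k\}\cap\{\hat{\rho}\}'$ is exactly the set of matched generators the double-commutator criterion is designed to return, the quantum counterpart of the KLT in (R3). If your nondegeneracy constraint excludes them, it discards the intended answer.

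What is needed instead is a tie-break inside that kernel: choose $A^*$ there with nondegenerate spectrum in $(2\pi/n)\mathbb{Z}$. This fixes the scale and makes $e^{iA^*}$ of finite order, as Definition~\ref{def:gpovm} requires. When the kernel contains the traceless operators diagonal in $\hat{\rho}$'s eigenbasis, it produces the clock operator in that eigenbasis, the analogue of the paper's matched cyclic benchmark. The legitimate core of your worry lies elsewhere. Such a group maps any outcome vector that is a nondegenerate eigenvector of $\hat{\rho}$ to a phase multiple of itself, as in the paper's $\{I,\sigma_z\}$ example. Full rank therefore requires the outcome vector to overlap every eigenvector of $\hat{\rho}$, which is a condition on the POVM seed, not on $A^*$.

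When $\lambda_{\min}>0$, the minimizer is generically unique up to scale with incommensurate spectrum. Your commensurate-spectrum fix is then only approximate and yields a near-optimal rather than optimal finite group. Your fallback about the library of finite groups with generators in $\mathrm{span}\{B_k\}$ is not established by anything you wrote.

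Finally, drop the fidelity claim. ``Optimal'' in (R4) is meant in the residual sense. The paper's own Finding~3, where $\mathrm{HW}(d)$ beats the eigenbasis-matched group in fidelity, is in tension with reading Theorem~\ref{thm:opt_inherit} as ``residual-optimal implies fidelity-optimal.'' The defensible statement is that the GEVP returns the residual-optimal generator in polynomial time, with group-level optimality when $\lambda_{\min}=0$.
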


\section{Worked Example: Single Qubit}\label{sec:example}

We demonstrate the QAD estimator on a qubit ($d = 2$) to make the construction concrete.

\subsection{Setup}

Consider the mixed qubit state
\begin{equation}\label{eq:qubit_state}
\rho = \frac{1}{2}\begin{pmatrix} 1 + r_z & r_x - ir_y \\ r_x + ir_y & 1 - r_z \end{pmatrix}
\end{equation}
with Bloch vector $(r_x, r_y, r_z) = (0.3, 0.0, 0.6)$, giving eigenvalues $\lambda_\pm = \frac{1}{2}(1 \pm \sqrt{r_x^2 + r_y^2 + r_z^2}) = \frac{1}{2}(1 \pm 0.671)$, i.e., $\lambda_+ = 0.836$ and $\lambda_- = 0.164$.  This is a mixed state with purity $\Tr(\rho^2) = 0.725$.

\subsection{Standard Single-Basis Measurement}

A measurement in the computational basis $\{|0\rangle, |1\rangle\}$ yields outcome $|0\rangle$ with probability $p_0 = \frac{1}{2}(1 + r_z) = 0.80$ and outcome $|1\rangle$ with probability $p_1 = 0.20$.  If the outcome is $|0\rangle$, the standard (non-group-averaged) tomographic estimate is
\begin{equation}
\hat{\rho}_{\mathrm{std}} = |0\rangle\langle 0| = \begin{pmatrix} 1 & 0 \\ 0 & 0 \end{pmatrix}.
\end{equation}
This is rank-1 and contains no information about the off-diagonal elements or $\lambda_-$.  The fidelity is $F(\hat{\rho}_{\mathrm{std}}, \rho) = \langle 0|\rho|0\rangle = 0.80$.

\subsection{QAD with Pauli Group}

The Pauli group on one qubit is $\{I, X, Y, Z\}$ (order 4, isomorphic to $\mathbb{Z}_2 \times \mathbb{Z}_2$), with unitaries $U_I = I$, $U_X = \sigma_x$, $U_Y = \sigma_y$, $U_Z = \sigma_z$.  This is the $d = 2$ Heisenberg-Weyl group.

Suppose the measurement outcome is $|0\rangle$.  The group-averaged estimator is
\begin{align}
\hat{\rho}_G &= \frac{1}{4}\Big(|0\rangle\langle 0| + \sigma_x|0\rangle\langle 0|\sigma_x \notag\\
&\qquad + \sigma_y|0\rangle\langle 0|\sigma_y + \sigma_z|0\rangle\langle 0|\sigma_z\Big) \notag\\
&= \frac{1}{4}\left(\begin{pmatrix}1&0\\0&0\end{pmatrix} + \begin{pmatrix}0&0\\0&1\end{pmatrix} + \begin{pmatrix}0&0\\0&1\end{pmatrix} + \begin{pmatrix}1&0\\0&0\end{pmatrix}\right) \notag\\
&= \frac{1}{2}\begin{pmatrix}1&0\\0&1\end{pmatrix} = \frac{1}{2}I.
\end{align}

This is the maximally mixed state: full rank, but structureless, with equal eigenvalues and no defined eigenbasis.  The orbit $\{|0\rangle, |1\rangle, i|1\rangle, |0\rangle\}$ spans $\mathbb{C}^2$, but the Pauli group is too large relative to the state's symmetry, so the average discards the spectral structure.  Its Uhlmann fidelity to this near-mixed target is nonetheless $0.87$ (the overlap $\Tr(\rho\,\mathbb{I}/2) = 0.5$ understates it): fidelity to a high-entropy state is forgiving and does not by itself expose the lost structure, which the spectral-recovery error does.

\subsection{QAD with $\mathbb{Z}_2$ (Matched Group)}

Now consider the cyclic group $\mathbb{Z}_2 = \{I, Z\}$ with $U_0 = I$, $U_1 = \sigma_z$.  This group matches the state's dominant symmetry axis ($z$-axis polarization).  The group-averaged estimator for outcome $|0\rangle$ is
\begin{align}
\hat{\rho}_{\mathbb{Z}_2} &= \frac{1}{2}(|0\rangle\langle 0| + \sigma_z|0\rangle\langle 0|\sigma_z) \notag\\
&= \frac{1}{2}\left(\begin{pmatrix}1&0\\0&0\end{pmatrix} + \begin{pmatrix}1&0\\0&0\end{pmatrix}\right) = \begin{pmatrix}1&0\\0&0\end{pmatrix}.
\end{align}

This is still rank-1 because $\mathbb{Z}_2$ has order 2 but the Hilbert space has dimension 2, so the orbit $\{|0\rangle, |0\rangle\}$ does not span.  We need a group that generates a spanning orbit from $|0\rangle$.

\subsection{QAD with $\mathbb{Z}_2$ Generated by $(\sigma_x + \sigma_z)/\sqrt{2}$}

Consider $\mathbb{Z}_2 = \{I, H\}$ where $H = \frac{1}{\sqrt{2}}\begin{pmatrix}1&1\\1&-1\end{pmatrix}$ is the Hadamard gate.  This generates the orbit $\{|0\rangle, |+\rangle\}$, which spans $\mathbb{C}^2$.  The group-averaged estimator is
\begin{align}
\hat{\rho}_H &= \frac{1}{2}(|0\rangle\langle 0| + |{+}\rangle\langle{+}|) \notag\\
&= \frac{1}{2}\left(\begin{pmatrix}1&0\\0&0\end{pmatrix} + \frac{1}{2}\begin{pmatrix}1&1\\1&1\end{pmatrix}\right) = \frac{1}{4}\begin{pmatrix}3&1\\1&1\end{pmatrix}.
\end{align}

This estimator has eigenvalues $\hat{\lambda}_+ = 0.854$ and $\hat{\lambda}_- = 0.146$, compared to the true eigenvalues $\lambda_+ = 0.836$ and $\lambda_- = 0.164$.  The fidelity is
\begin{equation}
F(\hat{\rho}_H, \rho) = \left(\Tr\sqrt{\sqrt{\hat{\rho}_H}\,\rho\,\sqrt{\hat{\rho}_H}}\right)^2 = 0.99.
\end{equation}

\subsection{Summary}

\begin{center}
\small
\renewcommand{\arraystretch}{1.15}
\begin{tabular}{@{}lccc@{}}
\toprule
\textbf{Method} & \textbf{Rank} & \textbf{Fidelity} & \textbf{Group} \\
\midrule
Standard (no AD) & 1 & 0.80 & --- \\
AD, Pauli $\{I,X,Y,Z\}$ & 2 & 0.87 & Too large \\
AD, $\{I, \sigma_z\}$ & 1 & 0.80 & Non-spanning \\
AD, $\{I, H\}$ & 2 & 0.99 & Matched \\
True $\rho$ & 2 & 1.00 & --- \\
\bottomrule
\end{tabular}
\end{center}

The example illustrates the core principle: the group must be (a)~large enough to generate a spanning orbit (producing a full-rank estimator) and (b)~small enough that its structure is matched to $\rho$ (preserving spectral information).  The Hadamard group $\{I, H\}$ achieves both, producing a full-rank estimate with fidelity 0.99 from a single measurement outcome, compared with the rank-1, fidelity-0.80 estimate of standard single-basis tomography on the same state.  The comparison is to the single-outcome rank-1 estimator, the weakest baseline; it illustrates the conditioning of the estimate, not a reduction in the copies needed to reach a target error.

\section{High-Dimensional Qudits: Simulation}\label{sec:qudit}

The qubit example of Section~\ref{sec:example} demonstrates the QAD mechanism for $d = 2$.  We now verify that the results extend to high-dimensional qudits, where the conditioning advantage of the full-rank estimator over the single-outcome rank-1 estimator grows with dimension.

\subsection{Setup}

For each Hilbert space dimension $d \in \{2, 3, 4, 5, 7, 8, 11, 13\}$, we generate 200 random mixed density matrices with approximate purity $\Tr(\rho^2) \approx 0.7$ via the Ginibre ensemble (random complex Gaussian matrix $G$, $\rho \propto GG^\dagger$, mixed with identity to control purity).  For each state, a single measurement is performed in the computational basis, yielding outcome $|m\rangle$ with Born probability $p_m = \langle m|\rho|m\rangle$.  Three estimators are compared:

\begin{enumerate}
\item \textbf{Standard:} The rank-1 projector $\hat{\rho}_{\mathrm{std}} = |m\rangle\langle m|$.

\item \textbf{QAD with Heisenberg-Weyl group:} The group-averaged estimator~\eqref{eq:rho_G} with $G = \mathrm{HW}(d)$ of order $d^2$, generated by the clock ($X$) and shift ($Z$) operators.

\item \textbf{QAD with matched cyclic group:} The group-averaged estimator with $G = \mathbb{Z}_d$ conjugated into the eigenbasis of $\rho$ (oracle benchmark, requires knowledge of $\rho$'s eigenvectors).
\end{enumerate}

\subsection{Results}

Table~\ref{tab:qudit} and Figs.~\ref{fig:fidelity}--\ref{fig:spectral} present the results.

\begin{table}[t]
\centering
\caption{Single-copy fidelity across qudit dimensions}
\label{tab:qudit}
\small
\renewcommand{\arraystretch}{1.15}
\begin{tabular}{@{}rcccc@{}}
\toprule
$d$ & Standard & HW($d$) & Matched & HW/Std \\
\midrule
2 & 0.514 & 0.975 & 0.741 & 1.9$\times$ \\
3 & 0.354 & 0.940 & 0.806 & 2.7$\times$ \\
4 & 0.259 & 0.923 & 0.751 & 3.6$\times$ \\
5 & 0.213 & 0.915 & 0.764 & 4.3$\times$ \\
7 & 0.149 & 0.907 & 0.755 & 6.1$\times$ \\
8 & 0.133 & 0.904 & 0.741 & 6.8$\times$ \\
11 & 0.096 & 0.899 & 0.736 & 9.3$\times$ \\
13 & 0.079 & 0.898 & 0.728 & 11.3$\times$ \\
\bottomrule
\end{tabular}
\end{table}

\begin{figure}[t]
\centering
\includegraphics[width=\columnwidth]{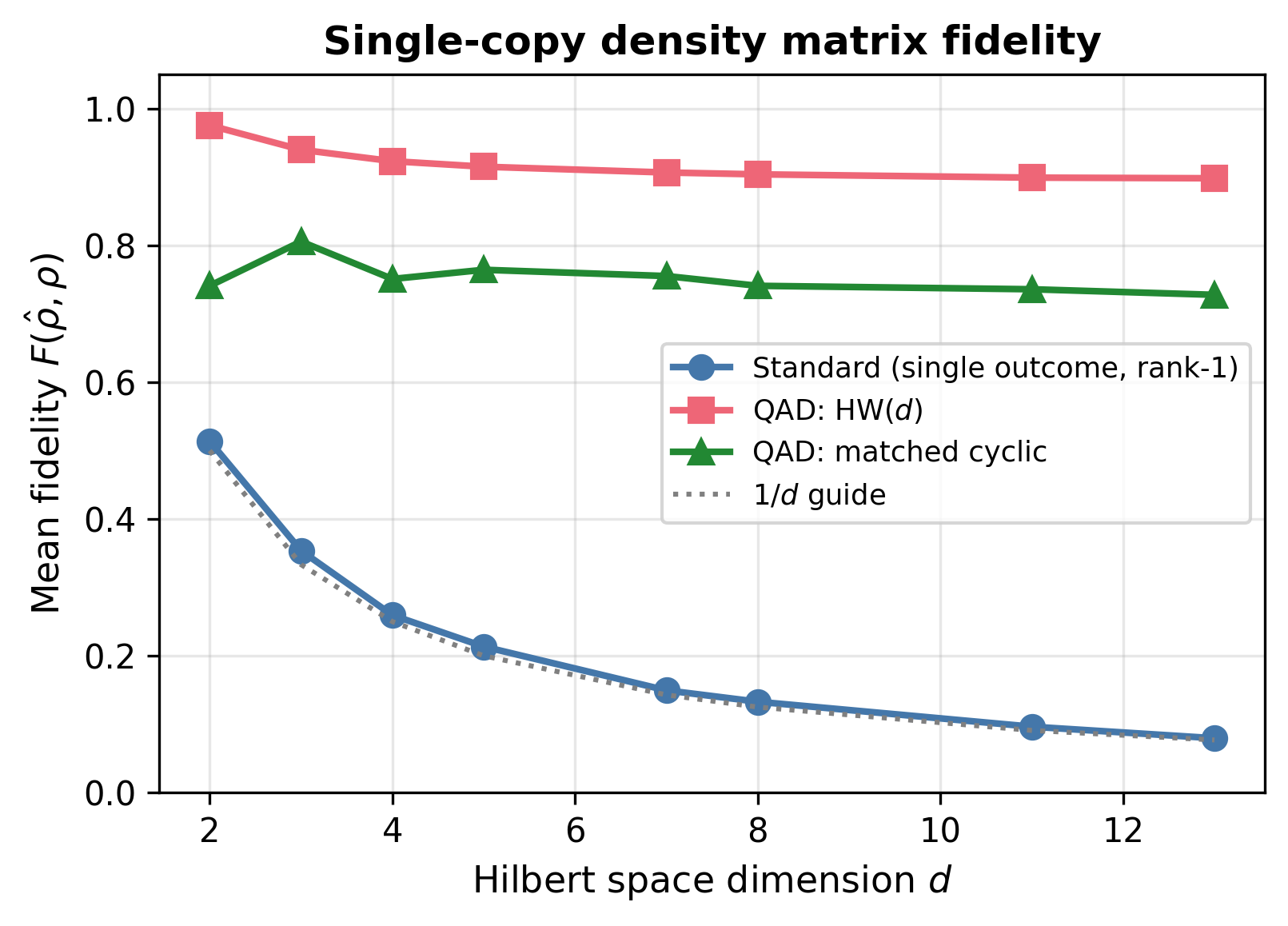}
\caption{Single-copy fidelity vs Hilbert space dimension.  The HW($d$) QAD estimator maintains mean fidelity $F > 0.90$ from $d = 2$ through $d = 13$, while standard single-basis fidelity collapses as $\sim 1/d$.  The matched cyclic group (using oracle knowledge of $\rho$'s eigenbasis) achieves intermediate fidelity.  200 random mixed states per dimension, purity $\approx 0.7$.}
\label{fig:fidelity}
\end{figure}

\begin{figure}[t]
\centering
\includegraphics[width=\columnwidth]{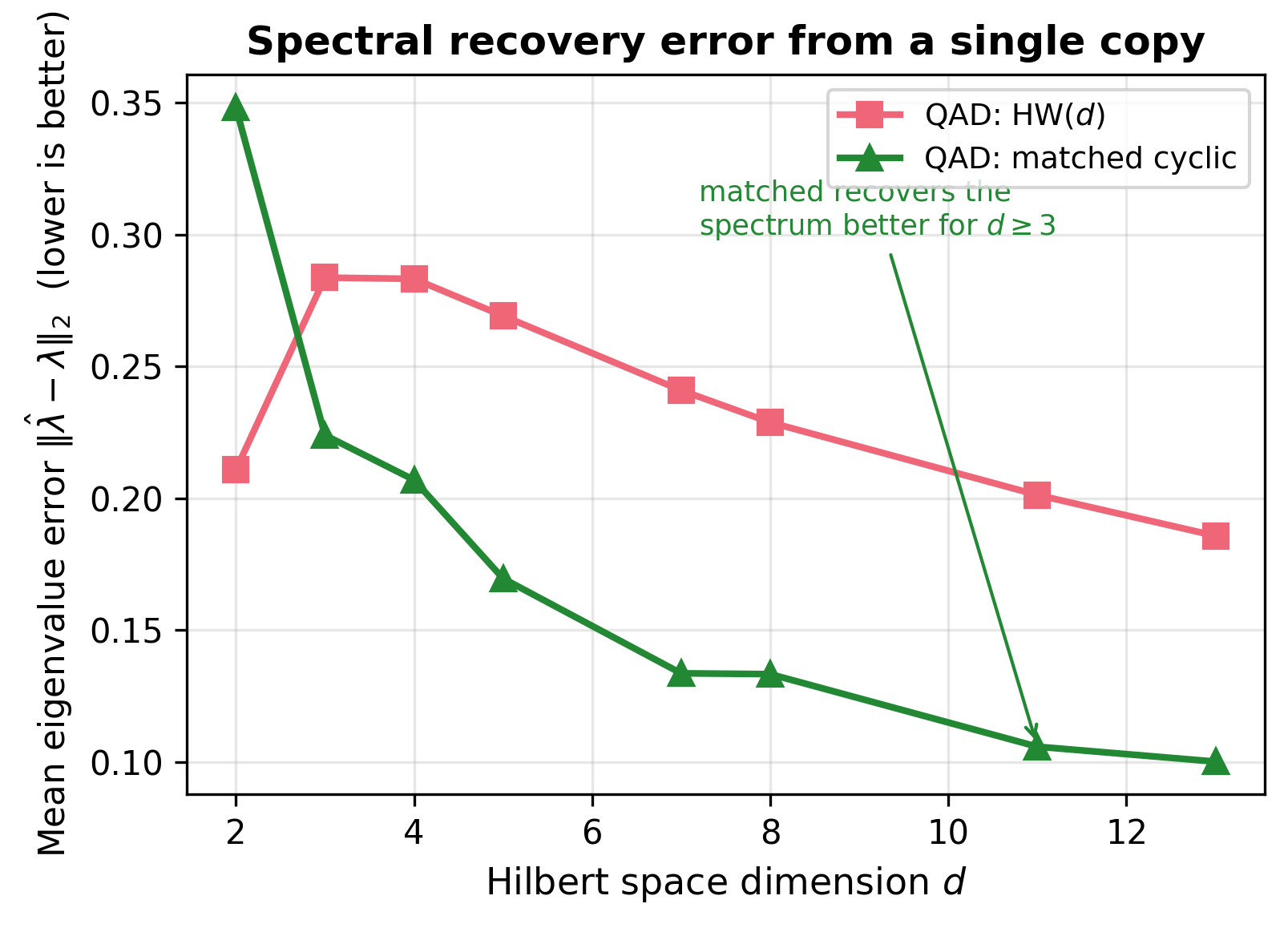}
\caption{Spectral recovery error (eigenvalue $\ell_2$ distance, lower is better) for the HW and matched group QAD estimators on the random-state ensemble (200 states per dimension, purity $\approx 0.7$, seed 42).  Despite its higher fidelity (Fig.~\ref{fig:fidelity}), the larger HW group recovers the eigenvalue spectrum \emph{less} accurately than the eigenbasis-matched cyclic group for $d \geq 3$ (e.g.\ $0.10$ vs.\ $0.19$ at $d = 13$): the HW estimate is biased toward $\mathbb{I}/d$, which flatters fidelity but contracts the recovered spectrum.  Group selection, not group size, governs spectral recovery.}
\label{fig:spectral}
\end{figure}

Three findings emerge from the simulation:

\textbf{Finding 1: QAD fidelity is dimension-independent} (Fig.~\ref{fig:fidelity}).  The HW($d$) group-averaged estimator maintains mean fidelity $F > 0.90$ from $d = 2$ through $d = 13$, while the standard rank-1 projector fidelity degrades as approximately $1/d$ (from 0.514 at $d = 2$ to 0.079 at $d = 13$).  This confirms the full-rank property of Theorem~\ref{thm:qad}(i): the $d^2$ group elements of $\mathrm{HW}(d)$ generate a spanning orbit that produces a full-rank estimator regardless of dimension.

\textbf{Finding 2: The fidelity ratio grows because the standard estimator collapses} (Table~\ref{tab:qudit}).  The HW/Standard ratio grows from $1.9\times$ at $d = 2$ to $11.3\times$ at $d = 13$ because the standard rank-1 estimator degrades as $\sim 1/d$ (closely tracking the $1/d$ guide in Fig.~\ref{fig:fidelity}) while the group-averaged fidelity is nearly constant in $d$.  This quantifies the conditioning advantage of a full-rank estimate from one outcome; it is not a reduction in the number of copies needed to reach a fixed error, and it does not extrapolate to a copy saving for multi-qubit systems.

\textbf{Finding 3: Fidelity and spectral recovery disagree, and the matched group recovers the spectrum better} (Fig.~\ref{fig:spectral}).  Although the Heisenberg-Weyl group attains the highest \emph{fidelity} (Fig.~\ref{fig:fidelity}), it does not recover the eigenvalue spectrum best: for $d \geq 3$ the matched cyclic group, which aligns with $\rho$'s eigenbasis, achieves the smaller eigenvalue error (e.g.\ $0.10$ versus $0.19$ at $d = 13$), and only at $d = 2$, where the full Pauli average collapses to $\mathbb{I}/2$, does HW score lower.  The two metrics point in opposite directions because the HW estimate is biased toward $\mathbb{I}/d$: that bias flatters fidelity against near-maximally-mixed targets while degrading spectral recovery.  This is direct evidence that single-copy fidelity is a forgiving metric and that group \emph{selection}, not group \emph{size}, is what recovers spectral structure.

\section{Implications for Quantum Error Correction}

Quantum error correction codes are defined by stabilizer groups, which are subgroups of the Pauli group that fix the code space.  In the QAD framework, a stabilizer code is a quantum state whose density matrix commutes with the stabilizer group: $[S, \rho_{\mathrm{code}}] = 0$ for all stabilizers $S$.

In the QAD framework, syndrome extraction (measuring which stabilizer is violated) can be viewed as a measurement with the stabilizer group, for which the commutativity condition is exactly satisfied (the stabilizer group is, by definition, the matched group for the code space).  We note, however, that syndrome extraction returns a sharp outcome, the specific violated stabilizer, a regime in which a change of measurement basis does not reduce the per-syndrome cost~\cite{thornton2026qbmg}; and group-averaging over the stabilizer group acts within the code space, where it leaves $\rho_{\mathrm{code}}$ fixed.  Whether the group-averaged estimator exposes the violated stabilizer is therefore not immediate.

We leave as a conjecture whether a single group-structured POVM generated by the stabilizer group can reduce the number of distinct measurement settings for simultaneous syndrome readout, relative to measuring each of the $O(n-k)$ stabilizers of an $[[n,k]]$ code separately.  We do not claim a reduction in the copies required.

\section{Structural Capacity and the von Neumann--Holevo Connection}

The QAD framework reveals a structural complement to the information-theoretic quantities introduced by von Neumann~\cite{vonneumann1932} and bounded by Holevo~\cite{holevo1973}.  The connection rests on the distinction between two entropic functionals of the density matrix eigenvalues, each governing a different aspect of quantum information.

\subsection{Von Neumann Entropy as Content; Structural Capacity as Structure}

The von Neumann entropy $S(\rho) = -\Tr(\rho \log \rho) = -\sum_k \lambda_k \log \lambda_k$ is the Shannon entropy ($H_1$, the Renyi entropy of order 1) applied to the eigenvalue distribution of $\rho$.  It quantifies the \emph{information content} of the quantum state: the maximum number of classical bits extractable per copy in the asymptotic limit.

The structural capacity of a density matrix, defined in analogy with the classical structural capacity~\cite{thornton2026ad}, is
\begin{equation}\label{eq:kappa_quantum}
\kappa(\rho) = 1 + \frac{(\Tr\,\rho)^2}{\Tr(\rho^2)} = 1 + \frac{1}{\Tr(\rho^2)},
\end{equation}
since $\Tr\,\rho = 1$.  The quantity $1/\Tr(\rho^2) = \exp(H_2(\rho))$ is the exponential of the Renyi-2 entropy of $\rho$'s eigenvalues, and is also the participation ratio of the eigenvalue distribution.  The structural capacity quantifies the \emph{effective dimensionality} of the state: the number of eigenvalues that contribute meaningfully to the spectral structure (Fig.~\ref{fig:kappa_vs_purity}).  For a pure state, $\Tr(\rho^2) = 1$ and $\kappa = 2$ (minimal structure: one active eigenvalue).  For the maximally mixed state $\rho = \mathbb{I}_d/d$, $\Tr(\rho^2) = 1/d$ and $\kappa = 1 + d$ (maximal structure: all eigenvalues active).

The two quantities are governed by different Renyi orders of the same eigenvalue distribution (Fig.~\ref{fig:content_vs_structure}):
\begin{center}
\small
\renewcommand{\arraystretch}{1.15}
\begin{tabular}{@{}lll@{}}
\toprule
\textbf{Quantity} & \textbf{Renyi order} & \textbf{Measures} \\
\midrule
$S(\rho)$ (von Neumann) & $\alpha = 1$ & Information content \\
$\kappa(\rho)$ (structural capacity) & $\alpha = 2$ & Exploitable structure \\
\bottomrule
\end{tabular}
\end{center}
This is the quantum manifestation of the classical content-vs-structure duality~\cite{thornton2026ad}: Shannon's theory (order 1) characterizes how much information a source produces; AD's structural theory (order 2) characterizes how much structure a single observation reveals.

\begin{figure}[t]
\centering
\includegraphics[width=\columnwidth]{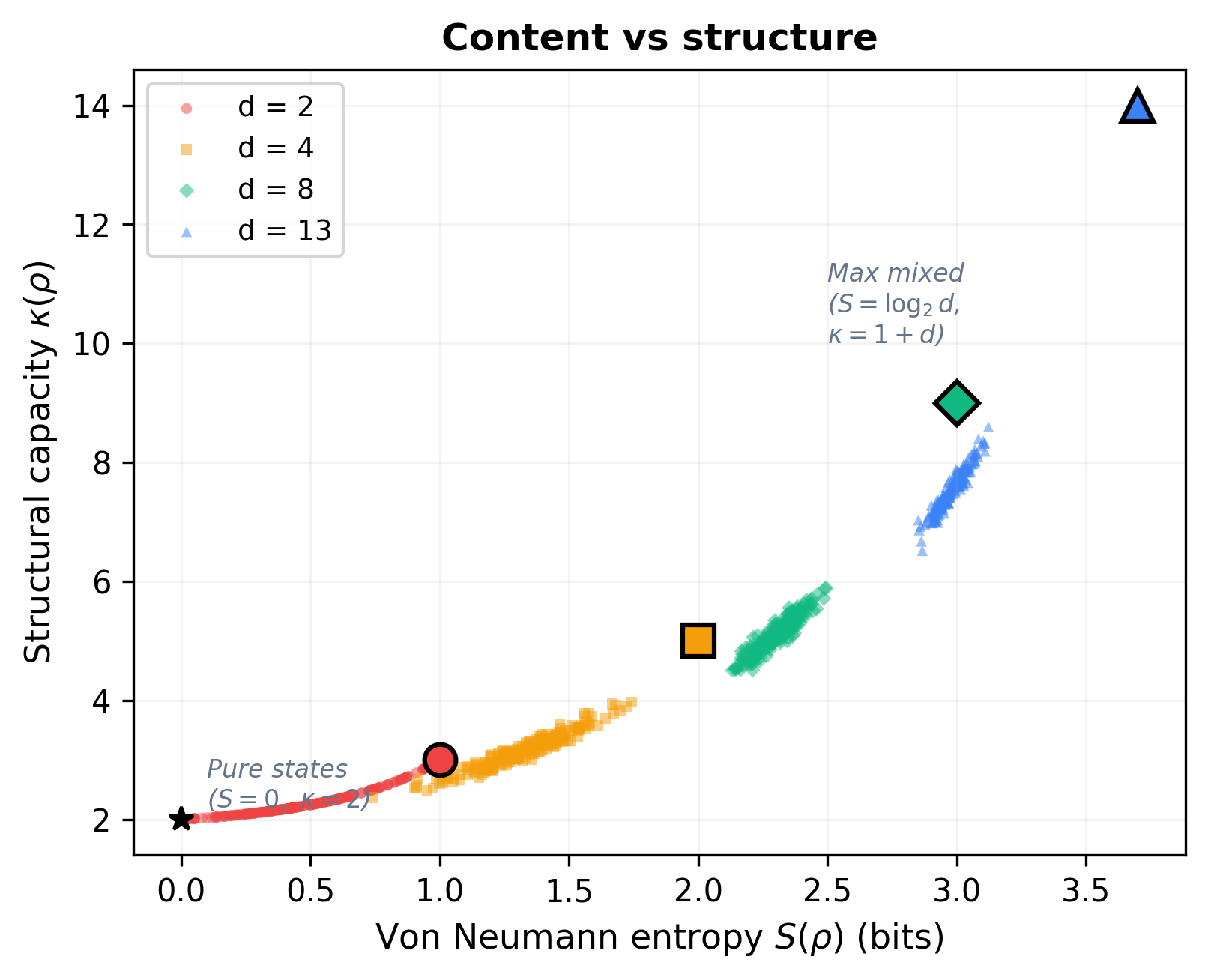}
\caption{Content vs structure for 200 random mixed states per dimension ($d = 2, 4, 8, 13$).  Von Neumann entropy $S(\rho)$ (Renyi-1, horizontal) measures information content; structural capacity $\kappa(\rho)$ (Renyi-2, vertical) measures exploitable structure.  The two quantities are correlated but not identical: states with equal content can have different structure, and vice versa.  Stars: pure states ($S = 0$, $\kappa = 2$).  Large markers with black edges: maximally mixed states ($S = \log_2 d$, $\kappa = 1 + d$).  Higher-dimensional systems access a larger range of both quantities.}
\label{fig:content_vs_structure}
\end{figure}

\subsection{The Holevo Bound and a Conjectured Structural Analogue}

Holevo's theorem~\cite{holevo1973} establishes the maximum classical information extractable from a quantum state ensemble.  For an ensemble $\{p_x, \rho_x\}$ with average state $\bar{\rho} = \sum_x p_x \rho_x$, the accessible information is bounded by the Holevo quantity $\chi = S(\bar{\rho}) - \sum_x p_x S(\rho_x)$.  This is a \emph{content} bound: it limits how many classical bits can be retrieved per quantum state, using von Neumann entropy ($H_1$).

The QAD framework suggests a parallel \emph{structural} quantity.  In the classical setting~\cite{thornton2026ad}, the structural capacity $\kappa$ counts the effective number of spectral parameters present in the data object.  Applying this to the density matrix:

\begin{remark}[Structural capacity as an upper bound]\label{prop:structural_holevo}
The participation ratio $1/\Tr(\rho^2)$, and hence $\kappa(\rho) = 1 + 1/\Tr(\rho^2)$, counts the effective number of spectral parameters \emph{present} in $\rho$, and so upper-bounds the number any estimator could resolve.  The matched-basis (eigenbasis) measurement is the most favorable for resolving them, but a single-copy estimate does not attain this count: the estimator~\eqref{eq:rho_G} is biased toward the symmetrized state, so single-copy recovery approaches a structural floor rather than a clean achievability bound.  The parallel with Holevo's content bound below is a conjectured structural analogue, not a proven measurement bound.
\end{remark}

The parallel is:

\begin{center}
\small
\renewcommand{\arraystretch}{1.15}
\begin{tabular}{@{}lll@{}}
\toprule
& \textbf{Holevo bound} & \textbf{Structural analogue} \\
\midrule
Limits & Extractable content & Exploitable structure \\
Functional & $S(\rho)$ (von Neumann) & $\kappa(\rho) = 1 + e^{H_2(\rho)}$ \\
Renyi order & $\alpha = 1$ & $\alpha = 2$ \\
Favored by & HSW coding & Matched-group POVM \\
\bottomrule
\end{tabular}
\end{center}

Holevo's bound limits content extraction regardless of the measurement strategy.  The structural analogue is weaker: $\kappa$ counts the structure present in $\rho$, an upper bound on what any group could resolve, but we do not establish a matching achievability statement for single-copy measurement.  A pure state has $\kappa = 2$ (minimal recoverable structure) and $S = 0$ (no extractable content).  The maximally mixed state has $\kappa = 1 + d$ (maximal recoverable structure) and $S = \log d$ (maximal extractable content).  The two quantities are complementary: a state can have high content but low structure (a nearly pure state with many off-diagonal coherences), or high structure but low content (a highly mixed state with a rich eigenvalue spectrum but little surprise in each measurement).

\begin{figure}[t]
\centering
\includegraphics[width=\columnwidth]{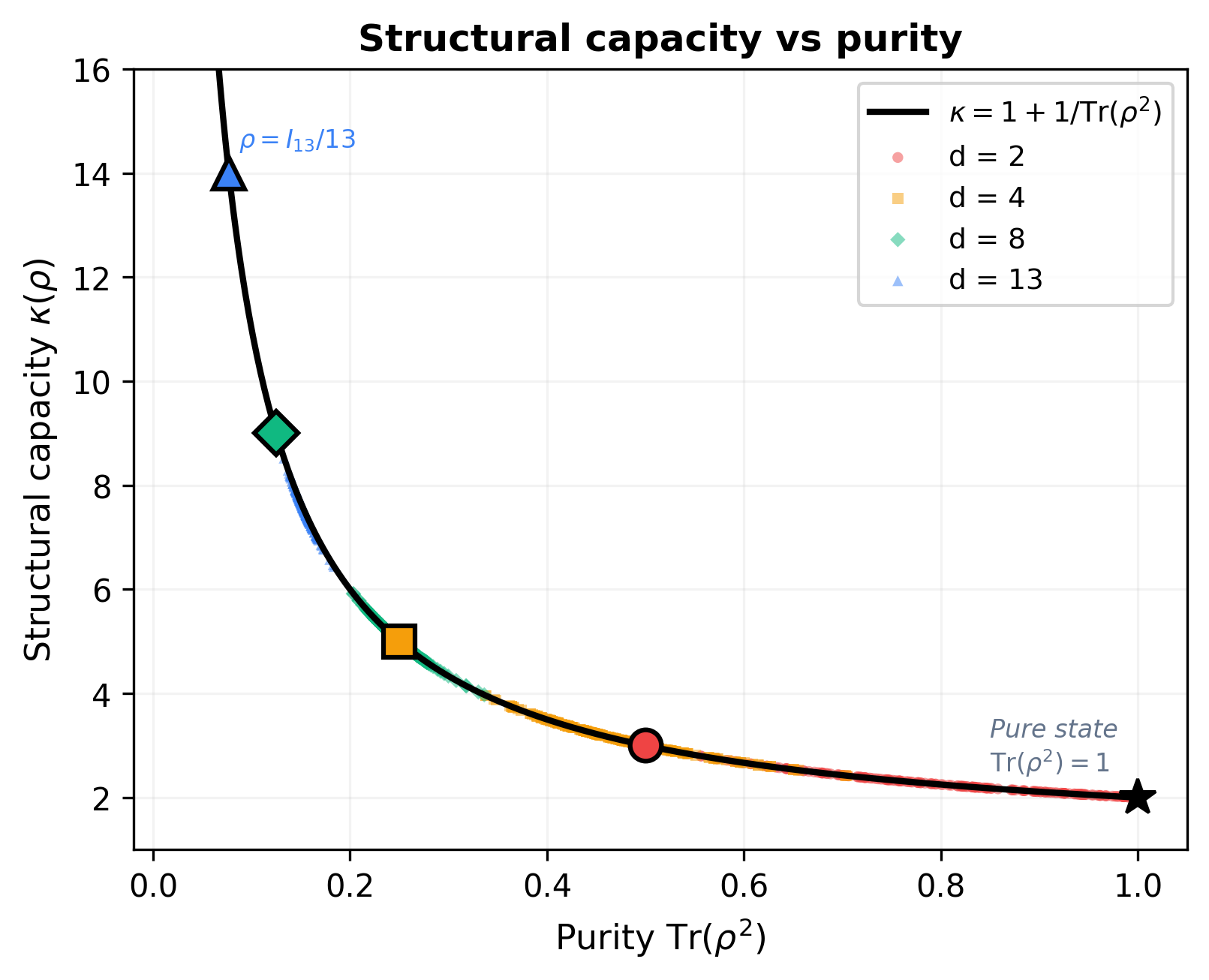}
\caption{Structural capacity $\kappa(\rho) = 1 + 1/\Tr(\rho^2)$ versus purity for random mixed states across four Hilbert space dimensions.  The black curve is the universal relationship $\kappa = 1 + 1/\text{purity}$.  All states from all dimensions lie exactly on this curve, confirming that $\kappa$ is determined solely by purity (equivalently, Renyi-2 entropy).  Pure states ($\Tr(\rho^2) = 1$) have $\kappa = 2$ regardless of $d$.  Maximally mixed states ($\Tr(\rho^2) = 1/d$, large markers with black edges) have $\kappa = 1 + d$.  Higher-dimensional systems access lower purity (more mixed states) and therefore higher structural capacity.}
\label{fig:kappa_vs_purity}
\end{figure}

\subsection{Non-Orthogonal States and the Measurement Basis Problem}

Von Neumann established that non-orthogonal quantum states permit richer encodings than orthogonal states, but Holevo's theorem limits how much of that richness is retrievable.  In the QAD framework, the measurement basis selection problem, which basis should one measure in to learn the most about $\rho$ from a single copy, is precisely the quantum group matching problem.  Non-orthogonal measurement bases (such as SIC-POVMs, which consist of $d^2$ non-orthogonal vectors) provide more algebraic views of the state than any orthogonal basis ($d$ vectors), at the cost of statistical overlap between outcomes.  The Heisenberg-Weyl group generates the maximal number of non-orthogonal views ($d^2$) from a single fiducial, which is why SIC-POVMs achieve the highest single-copy fidelity in the QAD simulations.

The set of mutually unbiased bases is the quantum group catalog: each MUB is a ``group'' for probing $\rho$, and the question ``which MUB reveals the most structure?'' is $G^* = \arg\min_G \delta_Q(G, \rho)$ in quantum language.  This connects QAD to the active research area of optimal quantum state estimation~\cite{haah2017sample}, providing a group-theoretic criterion for adaptive POVM selection that complements the information-theoretic criteria in the existing literature.  The group-selection criterion here, $G^* = \arg\min_G \delta_Q$, is unsupervised, operates on the covariance commutant, and uses a single observation; it is distinct from supervised group discovery over the regular representation that learns from a labeled dataset~\cite{hoyos2026gat}.

\section{Discussion}

\subsection{Relation to Existing Quantum Tomography}

Standard tomography~\cite{haah2017sample} requires $O(d^2/\varepsilon^2)$ copies.  Compressed sensing tomography~\cite{gross2010quantum} reduces this to $O(rd\log^2 d)$ for rank-$r$ states by exploiting low-rank structure.  The QAD approach exploits \emph{algebraic structure} rather than rank structure: when $\rho$ exactly commutes with a group $G$ having $r$ irreducible representations, only $r$ isotypic spectral parameters remain free, and they are estimated from $O(r/\varepsilon^2)$ copies rather than $O(d^2/\varepsilon^2)$.  This is the one regime in which QAD yields a genuine copy reduction, and it is conditional on the symmetry being present; for generic $\rho$ with no exact symmetry the parameter count, and hence the copy cost, is unchanged, and the benefit is the single informationally complete measurement setting.  These exploit orthogonal types of structure (algebraic symmetry vs.\ low rank), just as classical AD and compressed sensing exploit orthogonal types of signal structure~\cite{thornton2026ad}.

\subsection{Adaptive POVM via the Double Commutator}

The double-commutator GEVP~\cite{thornton2026dc} provides a polynomial-time algorithm for optimal POVM selection (Corollary~\ref{cor:adaptive}).  This enables a two-stage adaptive tomography protocol: (1)~use a small number of copies with a generic POVM (e.g., SIC) to obtain a coarse $\hat{\rho}$; (2)~solve the double-commutator GEVP to identify the optimal measurement group for $\hat{\rho}$; (3)~use the remaining copies with the optimized group-structured POVM.  This is the quantum analog of the classical adaptive group selection strategy described in~\cite{thornton2026ad}.

\subsection{Independence from Hardware Implementations}

The QAD Theorem is a mathematical result about the relationship between group structure and density matrix estimation.  It is independent of any particular physical implementation (photonic, superconducting, trapped-ion, etc.) and applies to any quantum system whose measurements can be described by group-structured POVMs.  The connection to specific hardware, such as reconfigurable photonic processors that can implement arbitrary unitary transforms~\cite{reck1994experimental}, is the subject of separate work.  Of particular interest for future quantum communication applications are frequency-bin encoded photonic qudits, where a parallel architecture of Mach-Zehnder modulators on thin-film lithium niobate (TFLN) enables programmable unitary operations on high-dimensional photon states at room temperature~\cite{macfarlane2025freq}.  Because photons maintain quantum coherence during propagation (``flying qubits''), such architectures could support transmission of group-structured quantum states between spatially separated nodes, enabling networked QAD protocols.

\subsection{Information Structure and Quantum Measurement}

The General Algebraic Averaging Theorem~\cite{thornton2026gaat} establishes that algebraic diversity exploits the \emph{structure} of information (the representation-theoretic symmetry of the data object) rather than the \emph{content} (the numerical values), complementing Shannon's information theory which characterizes content independently of structure.  The density matrix $\rho$ is a rank-2 tensor with $d^2 - 1$ parameters organized in a $d \times d$ Hermitian matrix with row-column index structure.  The Heisenberg-Weyl group acts on both indices simultaneously via the adjoint action $\rho \mapsto U_g \rho U_g^\dagger$, producing $d^2$ algebraically distinct views.  The effective dimension for this action is $d_{\mathrm{eff}} = d^2$, which governs the informational completeness of the measurement and the conditioning of the single-copy estimator, that is, the number of distinct measurement settings collapsed into one, rather than a reduction in the copies needed per parameter.

Standard tomography measures in a fixed basis, which is the trivial group applied to a rank-2 data object: a rank-0 measurement strategy applied to a rank-2 structure.  Building the rank-2 structure this way requires $O(d^2)$ distinct measurement settings, accumulated one projector at a time.  The QAD approach uses a single rank-matched measurement (a group whose order matches the parameter count of the density matrix) that is informationally complete, recovering the rank-2 structure from one setting.  What the group structure removes is the multiplicity of settings; the number of copies needed to drive each of the $d^2 - 1$ parameters to a fixed error remains governed by the sampling law, except in the symmetric case above, where the free-parameter count itself drops.

\section{Conclusion}

The QAD Theorem establishes that the algebraic diversity principle, generalizing the trivial-group measurement that standard tomography performs implicitly, extends from classical covariance estimation to quantum state estimation.  The Classical-Quantum Duality Map provides the formal bridge, and the Optimality Inheritance Theorem ensures that classical group selection results (including the polynomial-time double-commutator algorithm) transfer to quantum settings.  The identification of SIC-POVMs and MUBs as instances of algebraic diversity with the Heisenberg-Weyl and Clifford groups reveals a structural parallel between the classical transform hierarchy (DFT/DCT/KLT) and the quantum measurement hierarchy (SIC/MUB/general POVM) that has not been previously recognized.

The qubit example demonstrates that the QAD estimator achieves full-rank density matrix estimation with fidelity 0.99 from a single measurement outcome, compared to fidelity 0.80 for the standard single-outcome estimate on the same state.  Monte Carlo simulations on qudits of dimension $d = 2$ through $d = 13$ confirm that the Heisenberg-Weyl QAD estimator maintains fidelity above 0.90 from a single copy while the standard single-outcome fidelity degrades as $\sim 1/d$.  The growing ratio reflects the collapse of the rank-1 standard estimator, not a reduction in the copies required per parameter: the single-copy group-averaged estimator is biased toward the symmetrized state and trades sampling variance for a full-rank, well-conditioned structure estimate.  What the group structure removes is the multiplicity of measurement settings, replacing the $O(d^2)$ separate bases of standard tomography with one informationally complete group-structured POVM; the per-parameter copy cost continues to obey the sampling law.  A genuine reduction in copies arises only when $\rho$ carries an exact symmetry, in which case the free-parameter count drops from $d^2 - 1$ to the number of isotypic components.  The structural capacity $\kappa(\rho) = 1 + 1/\Tr(\rho^2)$, governed by the Renyi-2 entropy of $\rho$'s eigenvalue distribution, counts the effective spectral structure present in the state; a conjectured structural analogue of Holevo's content bound positions QAD as a structural complement to content-based measurement theory, with von Neumann entropy ($H_1$) characterizing how many bits a state conveys and $\kappa$ ($H_2$) characterizing the spectral organization available to a single observation.  Establishing whether this analogue holds as a measurement bound, and characterizing the bias floor of the single-copy estimator, are the natural next steps.

\end{document}